\newcommand\R{{\mathbb{R}}}
\newcommand{\set}[1]{\left\{{#1}\right\}}
\newcommand{\remove}[1]{}
\newcommand{\wt}{\widetilde}
\renewcommand{\epsilon}{\varepsilon}
\renewcommand{\leq}{\leqslant}
\renewcommand{\geq}{\geqslant}
\newcommand{\half}{\frac{1}{2}}
\newcommand{\uone}{{(1)}}
\newcommand{\utwo}{{(2)}}
\newcommand{\ui}{{(i)}}
\newcommand{\uj}{{(j)}}
\newcommand{\un}{{(n)}}
\newcommand{\littleo}[1]{\ensuremath{\operatorname{o}\bigl(#1\bigr)}}
\newsavebox{\savepar}
\newenvironment{boxit}[0]
 { \begin{center}\begin{lrbox}{\savepar} \begin{minipage}[t]{4.8in}}
 { \end{minipage}\end{lrbox}\fbox{\usebox{\savepar}} \end{center}}
\newcommand{\bcaption}[1]{\begin{center}
                         {\underline{\textit{#1}}}
                          \end{center}}
\begin{document}

\title{Stochastic Service Placement}

\author{Galia Shabtai
\and Danny Raz
\and Yuval Shavitt
}
\institute{Tel-Aviv University and Technion \\\email{galiashabtai@gmail.com},\,\, \email{danny@cs.technion.ac.il},\,\, \email{shavitt@eng.tau.ac.il}}


\maketitle

\begin{abstract}
Resource allocation for cloud services is a complex
task due to the diversity of the services and the dynamic
workloads. One way to address this is by overprovisioning which
results in high cost due to the unutilized resources. A much
more economical approach, relying on the stochastic nature of
the demand, is to allocate just the right amount of resources and
use additional more expensive mechanisms in case of overflow
situations where demand exceeds the capacity. In this paper we
study this approach  and show both by comprehensive analysis for
independent normal distributed demands  and simulation on synthetic data that it is significantly better than
currently deployed methods.
\end{abstract}

%

%
%

\section{Introduction} \label{sec:intro}

The recent rapid development of cloud technology gives rise to ``many-and-diverse'' services being deployed in datacenters across the world. The allocation of the available resources in the various locations to these services has a critical impact on the ability to provide a ubiquitous cost-effective high quality service.
There are many challenges associated with optimal service placement due to the large scale of the problem, the need to
obtain state information, and the geographical spreading of the datacenters and users.

One intriguing problem is the fact that the service resource requirement changes over time and is not fully known at the time of placement.
A popular way of addressing this important problem is over-provisioning, that is allocating resources for the peak demand. Clearly, this is not a cost effective approach as much of the resources are unused most of the time. An alternative approach is to model the service requirement as a stochastic process with known parameters (these parameters can be inferred from historical data).

Much of the previous work that used stochastic demand modeling  \cite{KRT00,GI99,WMZ11,BE12} attempted to minimize the probability of an overflow event and not the cost of this overflow. Their solution was based on solving the Stochastic Bin Packing (SBP) problem, where the goal is to pack a given set of random items
to a minimum number of bins such that each bin overflow probability will not exceed a given value.\footnote{\cite{WMZ11,BE12} also looked at the online SBP problem, where the items (services in our case) are assigned to some bin (datacenter) as they arrive.}

Thus, in these works there is no distinction between cases with marginal and substantial overflow of the demand, and under this modeling one again prepares for the worst and tries to prevent an overflow. However, in reality it is not cost effective to allocate resources according to the worst case scenario. Instead, one often wishes to minimize the cost associated with periods of insufficient resource availability,
which is commonly dealt with by either diverting the service request to a remote location or dynamically buying additional resources.
In both cases, the cost associated with such events is proportional to the amount of unavailable resources.
Thus, our aim is to minimize the expected overflow of the demand over time.


We deviate from previous work in two ways. First, We look at a stochastic packing problem
where the number of bins is given, e.g., a company already has two datacenters where services can be placed, and one would
like to place services in these two datacenters optimally. Second, as we mentioned before, we look at a more practical optimality criterion
where we do not wish to optimize the probability of an overflow, but instead its expected deviation.

To better understand this,
assume that the resource we are optimizing is the bandwidth consumption of the service in a datacenter where we have a prebooked bandwidth
for each datacenter. Since we are dealing with stochastic bandwidth demand, with some probability the traffic will exceed the
prebooked bandwidth and will result with expensive overpayment that depends on the amount of oversubscribed traffic.
Obviously, in such a case minimizing the probability of oversubscription may not give the optimal cost,
and what we need to minimize is the expected deviation from the prebooked bandwidth, thus we term our problem
SP-MED (stochastic packing with minimum expected deviation).

We analyze the case of independent normal distribution and develop algorithms for the optimal partition between two or more datacenters that need not be identical. We prove the correctness of our algorithm by separating the problem into two: one dealing with the continuous characteristics of the stochastic objective function and the other dealing with the discrete nature of the combinatorial algorithm. This approach results in a clean and elegant algorithm and proof.

In fact, our proof technique reveals that the algorithms we developed hold for a large family of optimization criteria, and thus may be applicable for other optimization problems. In particular we study two other natural cost functions that correspond to minimizing overflow probability (rather than expected overflow deviation). We show that these cost functions also fall into our general framework, and therefore the same algorithm works for them as well! As will become clear later the requirements from the cost functions we have are quite natural, and we believe our methods will turn useful for many other applications.

\section{Related work}
\label{sec:related_work}

Early work on VM placement (e.g., \cite{AT07,BKB07,MN08}) models the problem as a deterministic bin packing problem, namely, for every service there is an estimate of its \emph{deterministic} demand. Stochastic bin packing was first suggested by Kleinberg, Rabani and Tardos \cite{KRT00} for statistical multiplexing. \cite{KRT00} mostly considered Bernoulli-type distributions. Goel and Indyk \cite{GI99} further studied Poisson and exponential distributions. Wang, Meng and Zhang \cite{WMZ11} suggested to model real data with the \emph{normal distribution}. Thus, the input to their stochastic packing problem
is $n$ independent services each with demand requirement distributed according to a distribution $X^\ui$ that is normal with mean $\mu^\ui$ and variance $V^\ui$. The output is some partition of the services to bins in a way that minimizes a target function that differs from problem to problem.

A naive approach to such a problem is to reduce it to classical bin packing as follows: for the $i$'th service define the \emph{effective size} as the number $e^\ui$ such that the probability that $X^\ui$ is larger than $e^\ui$ is small; then solve the classical bin packing problem (or a variant of it) with item sizes $e^\uone,\ldots,e^\un$. However, \cite{WMZ11} showed this approach can be quite wasteful, mostly because it adds extra space per service and not per bin. To demonstrate the issue, think about unbiased, independent coin tosses. The probability one coin toss significantly deviates from its mean is $1$, while the probability $100$ independent coin tosses significantly deviate from the mean is exponentially small. When running independent trials there is a \emph{smoothing} effect that considerably reduces the chance of high deviations. This can also be seen from the fact that the standard deviation of $n$ independent, identical processes is only $\sqrt{n}$ times the standard deviation of one process, and so the standard deviation grows much slower than the number of processes.

Breitgand and Epstein \cite{BE12}, building on \cite{WMZ11}, suggest an algorithm for stochastic bin packing that takes advantage of this smoothing effect. The algorithm assumes all bins have equal capacity. The algorithm first sorts the processes by their variance to mean ratio (VMR), i.e., $\frac{V^\uone}{\mu^\uone} \le  \frac{V^\utwo}{\mu^\utwo} \le \ldots \le
\frac{V^\un}{\mu^\un}$. Then the algorithm finds the largest prefix
of the sorted list such that allocating that set of services to the
first bin makes the probability the first bin overflows at most
$p$. The algorithm then proceeds bin by bin, each time
allocating a prefix of the remaining services on the sorted list to the next bin. \cite{BE12} show that if
we allow \emph{fractional} solutions, i.e., we allow splitting services between bins, the algorithm finds an
optimal solution, and also show an online, integral version that
gives a $2$-approximation to the optimum.

\section{The risk unbalancing principle: A bird's overview of our technique}
\label{sec:risk_unbalancing_principle}

In this bird's overview we focus on the SP-MED problem, and in the next section we generalize it to a much wider class of cost functions.
As mentioned above we are given $k$ bins with capacities $c_1,\ldots,c_k$.
We are looking for a solution that minimizes the expected deviation. As before, we study the case where the stochastic demands are independent and normally distributed. We develop a new general framework to analyze this problem that also sheds light on previous work.

We first observe that an optimal fractional solution to SP-MED is also optimal for any two bins. This is true because the cost function is the sum of the expected deviation of all the bins, and changing the internal allocation of two bins only affects them and not the other bins. Thus, one possible approach to achieve an optimal solution, is by repeatedly improving the internal division of two bins. However, offhand, there is no reason to believe such a sequence of local improvements efficiently converges to the optimal solution. Surprisingly, this is indeed the case. Therefore, we first focus on the two bin case. Later (in Appendix \ref{sec:k}) we will see that
solving the $k=2$ case implies a solution to the general case of arbitrary $k$.

We recall that if we have $n$ independent normally distributed services with mean and variance
$(\mu^\ui,V^\ui)$, and we allocate the services with indices in $I
\subseteq [n]$ to the first bin, and the rest to the second bin,
then the first bin is normally distributed with mean
$\mu_1=\sum_{i \in I} \mu^\ui$ and variance $V_1=\sum_{i \in
I} V^\ui$, while the second one is normally distributed with mean $\mu-\mu_1$ and variance $V-V_1$ where $\mu=\sum_{i=1}^{n} \mu^
\ui$ and $V=\sum_{i=1}^{n} V^\ui$. We wish to minimize the total expected deviation of the two bins.

Let us define the function $Dev:[0,1] \times [0,1] \to \R$ such that $Dev(a,b)$ is the total expected deviation when bin one is distributed with mean $a\mu$ and variance $bV$ and bin two with mean $(1-a)\mu$ and variance $(1-b)V$.  Figure \ref{fig:normal_dev_a_b_c200_m160_v6400} (left) depicts $Dev(a,b)$ for two bins with equal capacity.
 $Dev$ has a reflection symmetry around the $a=b$ line, that corresponds to the fact that we can change the order of
the bins as long as they have equal capacity and remain with a
solution of the same cost (see Appendix \ref{app:spmed:symmetry} for
a proof of a similar symmetry for the case of different
capacity bins). The points $(0,0)$ and $(1,1)$ correspond to
allocating all the services to a single bin, and indeed the expected
deviation there is maximal.

The point $(\frac{1}{2},\frac{1}{2})$ is a \emph{saddle point} (see
Appendix \ref{app:spmed:saddle} for a proof). Figure
\ref{fig:normal_dev_a_b_c200_m160_v6400} (middle) shows a zoom in around
this saddle point. There are big mountains in the lower left and
upper right quarters that fall down to the saddle point, and also
valleys going down from the saddle point to the bottom right and top
left. Going back to Figure \ref{fig:normal_dev_a_b_c200_m160_v6400} (left)
we see that for every fixed $b$ the function is convex in $a$, with
a single minimum point, and all these minimum points form the two
valleys mentioned above.

\begin{figure}
  \begin{center}
  \includegraphics[width=4cm]{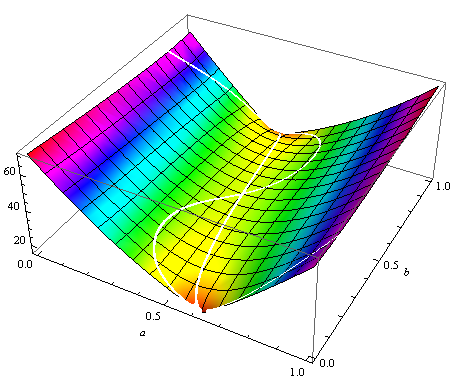}
  \includegraphics[width=4cm]{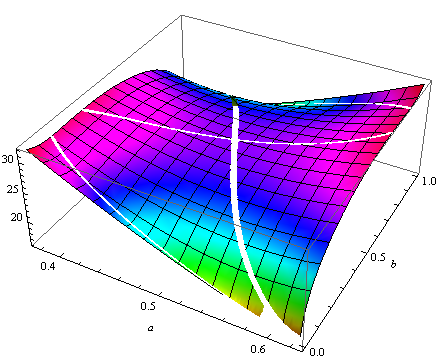}
  \includegraphics[width=3cm]{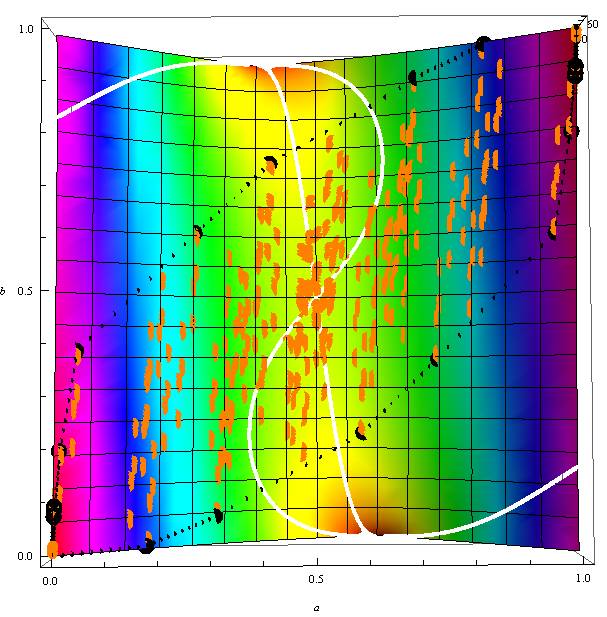}
  \end{center}
  \caption{The left figure depicts $Dev(a,b)$ when $\mu=160$, $V=6400$ and $c_1=c_2=100$. The middle figure is a zoom in around the saddle point $(\half,\half)$. The white lines represent the zero level sets of the partial derivatives. The right figure depicts $Dev(a,b)$ with the $2^{10}$ possible partitions represented in orange. The bottom sorted path (that sorts services by their VMR in \emph{increasing} order) and the upper sorted path (that sorts services by their VMR in \emph{decreasing} order), together with their integral points, are in black. Notice that all partition points are confined by both sorted paths.}
  \label{fig:normal_dev_a_b_c200_m160_v6400}
\end{figure}

We now go back and consider the input
$\set{(\mu^\ui,V^\ui)}_{i=1}^n$. We can represent service
$i$ with the pair $(a^\ui=\frac{\mu^\ui}{\mu},b^\ui=\frac{V^\ui}{V})$. If we split the input to the sets
$I$ and $[n] \setminus I$, the expected deviation of this partition is
$Dev(\sum_{i \in I} a^\ui,\sum_{i \in I} b^\ui)$. Thus, the $n$ input
points induce $2^n$ possible solutions $P_I=(\sum_{i \in I} a^\ui,\sum_{i
\in I} b^\ui)$ for each $I \subseteq [n]$, and our task is to find the partition
$I$ that minimizes $Dev$. Sorting the services by their VMR, is equivalent to sorting the vectors $P^\ui=(a^\ui,b^\ui)$ by their
angle with the $a$ axis. If the sorted list is
$P^\uone,\ldots,P^\un$ with increasing angles, looking for a solution in a partition that breaks the sorted sequence to two consecutive parts finds a solution among
$(0,0),P^\uone,P^\uone+P^\utwo,\ldots,P^\uone+\ldots+P^\un=(1,1)$. If we connect a line
between two consecutive points in the above sequence we get the \emph{sorted
path}, connecting $(0,0)$ with $(1,1)$.

A crucial observation is that all the $2^n$ possible solutions $P_I$ lie on or above
that sorted path. See Lemma \ref{lem:epigraph_integral} for a rigorous statement and proof and Figure \ref{fig:normal_dev_a_b_c200_m160_v6400} (right) for an example.
The fact that the \emph{sorted path always lies beneath all possible solutions} is \emph{independent of the target function}, and only depends on the fact that we deal with the normal distribution. Actually, this is true for any distribution where allocating a subset of services to a bin
amounts to adding the corresponding means and variances.

A second crucial observation, proved in Theorem \ref{thm:fractional}, is that the optimal fractional solution lies on the sorted path. In fact, the proof shows this is a general phenomenon that holds for many possible target functions. What we need from the cost function $Dev$ we are trying to minimize is:

\begin{itemize}
\item $Dev$ has a symmetry of reflection around a line that passes through a saddle point $(a',\frac{1}{2})$.

\item For every fixed $b$, $Dev$ is strictly uni-modal in $a$ (i.e., has a
unique minimum in $a$). We call the set of solutions
$\set{(m(b),b)}$ the \emph{valley}. The saddle point lies on the
valley.

\item $Dev$ restricted to the valley is strictly monotone for $b \le \half$ and $b \ge \half$ with a maximum at the saddle point that is obtained when $b=\half$.
\end{itemize}


To see why these conditions suffice, consider an arbitrary fractional solution $(a,b)$. By the
symmetry property we may assume without loss of generality that $b
\le \frac{1}{2}$. If $(a,b)$ is left to the valley, we can improve the
solution by keeping $b$ and moving $a$ towards the valley, until we
either hit the sorted path or the valley. If we hit the valley
first, we can improve the expected deviation, by going down the
valley until we hit the sorted path. A similar argument applies
when $(a,b)$ is right to the valley. The conclusion is that we can always find another fractional solution that is better than $(a,b)$ and lies on the sorted path.

Thus, quite surprisingly, we manage to decouple the question to two separate and almost orthogonal questions. The first question is what is the behavior of the function $Dev$ over its entire domain. Notice that $Dev$ is a function of only two variables, independent of $n$. This question \emph{does not depend on the input} $\set{(\mu^\ui,V^\ui)}$ at all. We study $Dev$ using analytic tools. The second question is what is the geometric structure of the set of feasible points, and here we investigate the input, \emph{completely ignoring the target function} $Dev$. For this question we use geometric intuition and combinatorial tools.

In Appendix \ref{sec:SP} we prove our three cost functions fall into the above framework. The discussion above also gives an intuitive geometric interpretation of the results in \cite{BE12,WMZ11}. We
believe the framework is also applicable to many other target functions.

There is an intuitive explanation why the optimal fractional solution lies on the sorted path. Sorting services by their VMR essentially sorts them by their risk, where risk is the amount of variance per one unit of expectation. Partitioning the sorted list corresponds to putting all the low risk services in one bin, and all the high risk services in the other. We call this the \emph{risk unbalancing principle}. Intuitively, we would like to give the high risk services as much spare capacity as possible, and we achieve that by grouping all low risk services together and giving them less spare capacity. In contrast, balancing risk amounts to taking the point $(\half,\half)$ (for the case where the two bins have equal capacities). Having the geometric picture in mind, we immediately see that this saddle point is not optimal, and can be improved by taking the point on the valley that intersects the sorted path.

The technique also applies to bins having different capacities that
previous work did not analyze, and reveals that we should allocate
low risk services to the bin with the lower capacity. This follows from a neat geometric argument that when $c_1<c_2$ the optimal fractional solution lies on the sorted path (that sorts services by their VMR in \emph{increasing} order) and not the upper sorted path (that sorts services by their VMR in \emph{decreasing} order). See Section \ref{sec:algorithm} and Theorem \ref{thm:fractional} for more details. If we have $k$ bins, we should sort the bins by their capacity
and the services by their VMR, and then allocate consecutive segments of the sorted list of services to the bins, with lower risk services allocated to smaller capacity bins. This double-sorting again intuitively follows from the risk unbalancing principle, trying to preserve as much spare capacity to the high risk services.\footnote{In a similar vein if we have total capacity $c$ to split between two bins, it is always better to make the bin capacities \emph{unbalanced} as much as possible, i.e., the minimum expected deviation decreases as $c_2 - c_1$ increases. In particular the \emph{best} choice is having a single bin with capacity $c$ and the \emph{worst} choice is splitting the capacities evenly between the two bins. Obviously, in practice there might be other reasons to have several bins, but if there is tolerance in each bin capacity it is always better to minimize the number of bins. We give a precise statement and proof in Appendix \ref{app:unbalance_bin_capacities}.}
We prove the double-sorting algorithm in Appendix \ref{sec:k}.


Finally, we are left with the question of \emph{finding} the right
partition of the sorted list. With two bins
one can simply try all $n$ partition points. However, with $k$ bins
there are ${n \choose k-1} = \Theta(n^{k-1})$ possible partition points. We
show a dynamic programming algorithm that finds the best partition
in $poly(n)$ time. The optimal solution can probably be found much faster and we have candidate algorithms that work well in practice and we are currently working on formally proving their correctness.

\section{Formal treatment}


The input to the problem consists of $k$ and $n$, specifying the number of bins and services, integers $\set{c_i}_{i=1}^k$, specifying the bin capacities,
and values $\set{(\mu^\ui,V^\ui)}_{i=1}^n$, where the demand distribution $X^\ui$ of service $i$ is normal with mean $\mu^\ui$ and variance $V^\ui$. $X^\ui$ are independent.
The output is a partition of $[n]$ to $k$ disjoint sets $S_1,\ldots,S_k \subseteq [n]$, where $S_j$ includes indices of services that needs to be allocated to bin $j$. Our goal is to find a partition minimizing a given cost function
$D(S_1,\ldots,S_k)$. The optimal (integral) partition is the partition with minimum cost among all possible partitions.

We let $c$ denote the total capacity, $\mu$ the total demand and $V$ the total variance, i.e., $c=\sum_{j = 1}^{k} c_j$, $\mu=\sum_{i = 1}^{n} \mu^\ui$ and $V=\sum_{i = 1}^{n} V^\ui$. The value $c-\mu$ represents the total spare capacity we have. The (total) standard deviation is $\sigma=\sqrt{V}$ which represents the standard deviation of the input had all services been put into a single bin. We let $\Delta$ denote the spare capacity in units of the standard deviation, i.e., $\Delta = \frac{c - \mu}{\sigma}$. We assume that $c \geq \mu$.

We remind the reader that the sum of independent normal distributions with mean $\mu^\ui$ and variance $V^\ui$ is normal with mean $\sum \mu^\ui$ and variance $\sum V^\ui$. Consider a partition $S_1,\ldots,S_k$. Then, the demand distribution of bin $j$ is normal with mean $\mu_j=\sum_{i \in S_j} \mu^\ui$ and variance $V_j=\sum_{i \in S_j} V^\ui$. The standard deviation of bin $j$ is $\sigma_j=\sqrt{V_j}$ and its spare capacity in units of its standard deviation is $\Delta_j = \frac{c_j - \mu_j}{\sigma_j}$. Following previous work, we assume that for every $i$, $\mu^\ui \ge 0$ and $V^\ui$ is small enough compared with $(\mu^\ui)^2$ so that the probability of getting negative demand in service $i$ is negligible.

Next we normalize everything with regard to the total mean $\mu$ and the total variance $V$. The input to the function $D$ are two vectors $a_1,\ldots,a_{k-1},b_1,\ldots,b_{k-1}$ s.t., $\sum_{j=1}^{k-1} a_j \le 1$ and $\sum_{j=1}^{k-1} b_j \le 1$. We think of the vectors  $a_1,\ldots,a_{k-1}$ and $b_1,\ldots,b_{k-1}$ as representing the fraction of mean and variance each bin takes. We let $a_k=1-\sum_{j=1}^{k-1} a_j$ and $b_k=1-\sum_{j=1}^{k-1} b_j$ be the fraction of the last bin (that takes all the remaining mean and variance). We let $\mu_j=a_j \mu$ and $V_j=b_j V$ for $j=1,\ldots,k$ and we define $\sigma_j=\sqrt{V_j}=\sqrt{b_j}\sigma$ (where $\sigma=\sqrt{V}$) and $\Delta_j=\frac{c_j-\mu_j}{\sigma_j}$. In this notation the cost function is $D(a_1,\ldots,a_{k-1};b_1,\ldots,b_{k-1})$.

The function $D$ is defined over all tuples $a_1,\ldots,a_{k-1}$, \allowbreak $b_1,\ldots,b_{k-1} \in [0,1]$ s.t., $\sum_{j=1}^{k-1} a_j \le 1$ and $\sum_{j=1}^{k-1} b_j \le 1$. There are $k^n$ possible partitions that correspond to the $k^n$ (possibly) different tuple values. We call a tuple \emph{integral} if it is induced by some partition. Our goal is to approximate the integral solution minimizing $D$ over all integral inputs.

\subsection{What do we require form a cost function?}
\label{sec:cost}

We can handle a wide variety of cost functions. Specifically, we require the following from the cost function:

\begin{enumerate}
\item
\label{it:kto2}
In any solution to a $k$-bin problem, the allocation for any two bins is also optimal. Formally, if $S_1,\ldots,S_k$ is optimal for a $k$-bin problem, then
for any $j$ and $j'$, the partition $S_j,S_{j'}$ is optimal for the two-bin problem defined by the services in $S_j \cup S_{j'}$ and capacities $c_j, c_{j'}$. We remark that this is a natural condition that is true for almost any cost function we know.
\end{enumerate}

\noindent
For $k=2$ we require that:

\begin{enumerate}
\setcounter{enumi}{1}

\item
\label{it:symmetry}
$D(a,b)=D(1-a-\frac{c_2-c_1}{\mu},1-b)$. When $c_1=c_2$ this simply translates to $D(a,b)=D(1-a,1-b)$ and there is no difference between allocating the set $S_1$ to the first bin or to the second one.

\item
\label{it:valley} For every fixed $b \in [0,1]$, $D(a,b)$ has a
unique minimum on $a \in [0,1]$, at some point $a=m(b)$, i.e. $D$ is
decreasing at $a < m(b)$ and increasing at $a>m(b)$. We call the
points on the curve $\set{(m(b),b)}$ the \emph{valley}.

\item
\label{it:saddle} $D$ has a unique maximum over the valley at the
point $(m(\frac{1}{2}),\frac{1}{2})$. In fact by the symmetry above
this point is $(\half-\frac{c_2-c_1}{2\mu},\half)$. This means that
$D(m(b),b)$ is increasing for $b \le \half$ and decreasing for $b
\ge \half$.
\end{enumerate}

\subsection{Three cost functions}
We have already seen SP-MED, which minimizes the expected deviation. Previous work on SBP (stochastic bin packing) studied the worst overflow probability of a bin. We define SP-MWOP to be the problem that minimizes the worst overflow probability of a bin. Another natural problem is what we call SP-MOP that minimizes the probability that some bin overflows. The three cost functions are related but behave differently. For example, SP-MWOP is not differentiable but the other two are. It is relatively simple to find the valley in SP-MED and SP-MWOP but we are not aware of any explicit description of the valley in SP-MOP.

Nevertheless, it is remarkable that all three cost functions fall into our framework and we prove that in Appendix \ref{sec:SP}. The proof is not always simple, and we have used Lagrange multipliers and the log concavity of the cumulative distribution function of the normal distribution for proving this for SP-MOP. It follows that all the machinery we develop in the paper holds for these cost functions.

\section{The two bin case}
\label{sec:SPMED:2}

We first consider the two bin case ($k=2$). We shall later see (in Appendix \ref{sec:k}) that solving the $k=2$ case is key for solving the general problem for any $k > 2$. As explained before we decouple the question to one about the function $D$ and another about the structure of the integral points.

\subsection{The sorted path}
\label{sec:2:sorted_path}

We now consider the input
$(\mu^\uone,V^\uone),\ldots,(\mu^\un,V^\un)$. We represent service
$i$ with the pair $(a^\ui=\frac{\mu^\ui}{\mu},b^\ui=\frac{V^\ui}{V})$. If we split the input to the sets
$I$ and $[n] \setminus I$, then the first bin is normally distributed with mean $\mu \sum_{i \in I} a^\ui$ and variance $V  \sum_{i \in I} b^\ui$.
Thus, the $n$ input points induce $2^n$ possible solutions $P_I=(\sum_{i \in I} a^\ui,\sum_{i
\in I} b^\ui)$ for each $I \subseteq [n]$ and we call each such point \emph{an integral point}.
Sorting the services by their VMR, is equivalent to sorting the vectors $P^\ui=(a^\ui,b^\ui)$ by the
angle they make with the $a$ axis.

\begin{definition}(The sorted path)
Sort the services by their VMR in increasing order and calculate the $P^\uone,P^\utwo,\ldots,P^\un$ vectors. For $i=1,\ldots,n$ define
\begin{eqnarray*}
P_{bottom}^{[i]} &=& P^\uone+P^\utwo+\ldots+P^\ui \mbox{ and,}\\
P_{up}^{[i]} &=& P^\un+P^{(n-1)}+\ldots+P^{(n-i+1)},
\end{eqnarray*}
and also define $P_{bottom}^{[0]}=P_{up}^{[0]}=(0,0)$.

The \emph{bottom sorted path} is the curve that is formed by connecting $P_{bottom}^{[i]}$ and $P_{bottom}^{[i+1]}$ with a line, for $i=0,\ldots,n-1$.
The \emph{upper sorted path} is the curve that is formed by connecting $P_{up}^{[i]}$ and $P_{up}^{[i+1]}$ with a line, for $i=0,\ldots,n-1$. We sometimes abbreviate the bottom sorted path and call it \emph{the sorted path}.
\end{definition}

The integral point $P_{bottom}^{[i]}$ on the bottom sorted path corresponds to allocating the $i$ services with the lowest VMR to the first bin and the rest to the second. On the other hand, the integral point $P_{up}^{[i]}$ on the upper sorted path corresponds to allocating the $i$ services with the highest VMR to the first bin and the rest to the second. A crucial, yet simple, observation (proven in Appendix \ref{app:epigraph_integral_proof}):

\begin{lemma}
\label{lem:epigraph_integral}
All the integral points lie within the polygon confined by the bottom sorted path and the upper sorted path.
\end{lemma}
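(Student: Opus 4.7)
The plan is to prove the lemma via a fractional relaxation argument: identify the two sorted paths as the optimal value curves of natural linear programs, and observe that every integral subset corresponds to a feasible point of those LPs.

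More concretely, first I would make the simple but crucial geometric observation that the bottom sorted path is \emph{convex} and the upper sorted path is \emph{concave}. This is immediate from the definition: the segment of the bottom path from $P_{bottom}^{[i-1]}$ to $P_{bottom}^{[i]}$ has slope $b^\ui/a^\ui$, which equals (a scaling of) the VMR of service $i$; since services are indexed in increasing VMR order, the slopes of successive segments are non-decreasing. The symmetric argument on the upper path, where services are added in decreasing VMR order, shows its slopes are non-increasing.

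Next, for a fixed subset $I \subseteq [n]$, let $a_0 = \sum_{i \in I} a^\ui$. Consider the fractional knapsack-style LP
\[
\min \sum_{i=1}^{n} x_i b^\ui \quad \text{s.t.} \quad \sum_{i=1}^{n} x_i a^\ui = a_0, \quad x_i \in [0,1].
\]
The indicator vector of $I$ is feasible and has objective value $\sum_{i \in I} b^\ui$. On the other hand, the standard greedy argument for fractional knapsack tells us that the LP is solved by filling up the $a$-budget using services in increasing order of $b^\ui/a^\ui$, taking a final fractional piece if needed. A direct calculation shows that this optimum, viewed as a function of $a_0 \in [0,1]$, traces out exactly the bottom sorted path: at the breakpoints $a_0 = \sum_{i=1}^{j} a^\ui$ it evaluates to $P_{bottom}^{[j]}$, and in between it interpolates linearly along the segment of slope $b^\uj/a^\uj$. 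Hence $\sum_{i \in I} b^\ui$ is at least the $b$-coordinate of the bottom sorted path at $a_0$, i.e.\ $P_I$ lies weakly above the bottom path. The symmetric argument, replacing minimization by maximization and greedy by decreasing-VMR greedy, identifies the corresponding LP optimum with the upper sorted path and shows $P_I$ lies weakly below it.

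Combining both inequalities places $P_I$ in the polygon bounded by the two paths, which is what the lemma claims. The only step that requires any care is the verification that the greedy LP optimum coincides exactly with the piecewise-linear sorted path (rather than merely bounding it); this is a routine check at the breakpoints plus a linearity argument between them, and is not really an obstacle. The argument is essentially LP-duality intuition dressed up geometrically, and nothing in it uses the normal distribution or the cost function $Dev$ — only the additive structure of $(a^\ui, b^\ui)$, which matches the remark in the paper that the enveloping property of the sorted path is independent of the target function.
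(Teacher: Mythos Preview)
Your proof is correct and takes a genuinely different route from the paper's. The paper proves the lemma via an elementary \emph{exchange argument}: for any permutation $\tau$ of the services, it considers the piecewise-linear path of partial sums $P_\tau^{[0]},\ldots,P_\tau^{[n]}$; swapping two adjacent out-of-VMR-order elements replaces one parallelogram diagonal by the other and thereby lowers the path. Starting from any $\tau$ that lists $I$ first (so that $P_I$ lies on the $\tau$-path) and bubble-sorting down to increasing VMR shows the bottom sorted path lies weakly below $P_I$; the symmetric argument handles the upper path. Your approach instead recognises the bottom sorted path as the optimal-value curve of the fractional knapsack LP $\min\sum x_i b^{(i)}$ subject to $\sum x_i a^{(i)}=a_0$, $x_i\in[0,1]$, and invokes the greedy optimality of sorting by $b^{(i)}/a^{(i)}$. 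The LP view is slightly less elementary but more structural: it immediately identifies the two sorted paths as the lower and upper boundaries of the set of all fractional points, so that Lemma~\ref{lem:epigraph_fractional} falls out for free, whereas the paper's swap argument gives only the integral containment directly. Conversely, the paper's parallelogram picture is completely self-contained and needs no appeal to LP optimality or knapsack folklore.
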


Lemma \ref{lem:epigraph_integral} is a key lemma of the paper. Unfortunately, we had to omit the proof because of lack of space. We recommend the reader to look at the simple proof that appears in Appendix \ref{app:epigraph_integral_proof}. In fact, 


\begin{lemma}
\label{lem:epigraph_fractional}
The set of fractional points coincides with the polygon confined by the bottom sorted path and the upper sorted path.
\end{lemma}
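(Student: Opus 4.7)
My plan is to recognize the set of fractional points as a zonotope and then identify its boundary with the two sorted paths. A fractional solution assigns a fraction $x_i \in [0,1]$ of each service to bin one, yielding the point $\sum_i x_i P^{(i)}$, so the set of fractional points is exactly
\[
Z \;=\; \Bigl\{ \sum_{i=1}^n x_i P^{(i)} \;:\; x_i \in [0,1] \Bigr\} \;=\; \sum_{i=1}^n [0, P^{(i)}],
\]
the Minkowski sum of the $n$ generating segments. A finite Minkowski sum of segments in $\R^2$ is a convex polygon, so $Z$ is a convex body whose extreme points necessarily correspond to $\{0,1\}$-valued $x$, i.e.\ to integral partitions.

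For the easy containment, that the polygon between the bottom and upper sorted paths is contained in $Z$, I would first note that each vertex $P_{bottom}^{[i]}$ and $P_{up}^{[i]}$ is itself integral (a prefix, respectively suffix, of the VMR-sorted list), hence lies in $Z$. Since the bottom path joins its vertices in order of strictly increasing slope $b^{(j)}/a^{(j)}$ it is a convex curve from $(0,0)$ to $(1,1)$, while the upper path joins them in the reverse order and is concave. Therefore the region enclosed between them is exactly the convex hull of these path vertices, and convexity of $Z$ gives the inclusion.

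For the reverse containment I would combine Lemma \ref{lem:epigraph_integral} with an extreme-point argument. Minimizing an arbitrary linear functional $c \cdot \sum_i x_i P^{(i)} = \sum_i x_i (c_1 a^{(i)} + c_2 b^{(i)})$ over $x \in [0,1]^n$ forces each $x_i$ to $0$ or $1$ according to the sign of $c_1 + c_2 \cdot (b^{(i)}/a^{(i)})$, which is monotone in the VMR. Hence every extreme point of $Z$ is realized by taking either a prefix or a suffix of the VMR-sorted list, i.e.\ it is one of the vertices of the bottom or upper sorted path. By Lemma \ref{lem:epigraph_integral}, every such integral point lies inside the polygon, and since $Z$ is the convex hull of its extreme points while the polygon is convex, we conclude $Z \subseteq$ polygon.

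The main obstacle, modest as it is, is handling ties cleanly: services with equal VMR, or parallel or zero vectors $P^{(i)}$, make the VMR-sorted order non-unique and can cause path vertices to coincide. Any consistent tie-breaking yields the same paths up to reparameterization, but I would want to verify that the extreme-point identification and the convexity/concavity of the two paths go through unchanged. A cleaner fallback, if the zonotope framing feels too abstract, is induction on $n$: adding one more segment $[0, P^{(n+1)}]$ to the polygon for the first $n$ services produces the new polygon by Minkowski addition, and re-sorting by VMR updates the two paths in the expected way.
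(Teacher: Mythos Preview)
The paper does not actually prove this lemma; it is stated immediately after Lemma~\ref{lem:epigraph_integral} prefaced only by the words ``In fact,'' with no argument given. Your zonotope proof is correct and fills that gap. Recognizing the fractional set as $Z=\sum_i [0,P^{(i)}]$ and identifying its extreme points via linear functionals is clean: the sign of $c_1+c_2\cdot\mathrm{VMR}_i$ is monotone in the VMR, so every vertex of $Z$ corresponds to a prefix or suffix of the VMR-sorted list, i.e.\ to a vertex of one of the two sorted paths, and both containments then follow from convexity. This is more conceptual than the bubble-sort argument the paper uses for Lemma~\ref{lem:epigraph_integral}; one could alternatively derive the present lemma from that one by noting that every fractional point is a convex combination of integral points and that the polygon is convex, but your direct identification of the zonotope boundary with the two sorted paths is tidier and makes the structure explicit. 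The tie issue you flag is real but harmless: parallel generators merge into a single longer edge of $Z$ without altering its shape, and any consistent VMR tie-breaking yields the same polygon.
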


\subsection{The sorting algorithm}
\label{sec:algorithm}

We now present the algorithm for the $k=2$ case. The algorithm for the general case of arbitrary $k$ is presented in Appendix \ref{sec:k}.
The algorithm gets as input $n,\set{(\mu^\ui,V^\ui)}_{i=1}^n,c_1,c_2$ and outputs a partition $S_1,S_2$ of $[n]$ minimizing cost. We keep all notation as before. The algorithm works as follows:

\begin{boxit}
\bcaption{The sorting algorithm: $2$ bins}
\begin{itemize}
\item Sort the bins by their capacity such that $c_1 \le c_2$.
\item Sort the services by their VMR such that $\frac{v^\uone}{\mu^\uone} \le \frac{v^\utwo}{\mu^\utwo} \le \dots \le \frac{v^\un}{\mu^\un}$.
\item Calculate the points $P^{[0]}=(0,0),P^{[1]},$ $\ldots,P^{[n]}=(1,1)$ on the sorted path.
\item Calculate $D(P^{[i]})$ for each $0 \le i \le n$ and find the index $i^*$ such that the point $P^{[i^*]}$ achieves the minimal cost among all points $P^{[i]}$.
\item Let $S_1=\set{1,\ldots,i^*}$ and $S_2=[n] \setminus S_1$. Output $(S_1,S_2)$.
\end{itemize}
\end{boxit}

The optimal fractional point is the fractional point that minimizes cost. In Section \ref{sec:risk_unbalancing_principle} we said, and soon will prove, that the optimal solution allocates low risk services to one bin and the rest to the other. However, offhand, it is not clear whether to allocate the smaller risk services to the lower capacity bin or the higher capacity bin. In fact, offhand, it is not clear whether the optimal solution is on the lower sorted path or the upper sorted path, and it might even depend on the input. Remarkably, we prove that it always lies on the bottom sorted path, meaning that it is always better to allocate low risk services to the smaller capacity bin and high risk services to the higher capacity bin.  We gave an intuitive explanation to this phenomenon in Section \ref{sec:risk_unbalancing_principle}. We prove Theorem \ref{thm:fractional} in Appendix \ref{app:fractional_proof}.

\begin{theorem}
\label{thm:fractional}
The optimal fractional point lies on the bottom sorted path. The optimal fractional solution splits at most one service between two bins.
\end{theorem}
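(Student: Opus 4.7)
The plan is to use Lemma~\ref{lem:epigraph_fractional} to identify the fractional feasible region with the polygon $P$ bounded between the bottom and upper sorted paths, and then to descend from any fractional point $(a,b) \in P$ to a point on the bottom sorted path using properties 2--4 of $D$. First I record a geometric fact about $P$: because the bottom sorted path sorts vectors by increasing VMR, its slopes are strictly increasing, so the path is convex as a graph $b = f(a)$; symmetrically, the upper sorted path is concave; both paths connect $(0,0)$ to $(1,1)$. Consequently the horizontal slice of $P$ at height $b$ is an interval whose left endpoint lies on the upper sorted path and whose right endpoint lies on the bottom sorted path.

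Given an arbitrary $(a,b) \in P$, I would use the symmetry in property 2 to reduce to the case $b \le \half$ and then apply a two-stage descent. In the first (horizontal) stage, I move $a$ monotonically toward $m(b)$, which strictly decreases $D$ by property 3; this stops either when the trajectory hits the valley at $(m(b), b)$ or when it exits the slice through one of its endpoints. In the second (valley) stage, invoked only if the first hit the valley, I decrease $b'$ from $b$ along the curve $\{(m(b'), b')\}$; by property 4 this also strictly decreases $D$ since $b \le \half$, and I continue until the valley leaves $P$. In either case the terminus lies on the boundary of $P$, which is the union of the two sorted paths.

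The delicate remaining step is to show that the terminus can always be arranged to lie on the bottom sorted path rather than the upper one, and this is where the hypothesis $c_1 \le c_2$ enters. The intuition is that the saddle's $a$-coordinate $\half - \frac{c_2 - c_1}{2\mu}$ lies in the left half of $[0,1]$, biasing the valley toward the left of $P$ so that descents along the valley exit through the right boundary (the bottom path). To formalize this, whenever the descent above terminates on the upper path at a point $(a^*, b^*)$, I would use property 2 to replace it with the reflected point $\bigl(1 - a^* - \frac{c_2 - c_1}{\mu},\,1 - b^*\bigr)$, which has the same $D$-value, and restart the descent from there, arguing that the new trajectory lands on the bottom path. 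The main obstacle is that this reflected point need not itself lie in $P$; handling this rigorously, by exploiting the convex/concave structure of the two sorted paths and the specific offset $\frac{c_2 - c_1}{\mu}$, is where I expect the proof to do real work. The second claim of the theorem is then immediate from the piecewise linear structure of the bottom sorted path: every point on it is of the form $P_{bottom}^{[i-1]} + t\,P^\ui$ for some $i$ and $t \in [0,1]$, corresponding to allocating services $1, \ldots, i-1$ entirely to bin~$1$, services $i+1, \ldots, n$ entirely to bin~$2$, and splitting only service $i$ in proportion $t:(1-t)$.
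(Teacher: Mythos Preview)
Your plan matches the paper's: reduce to $b\le\half$ via the symmetry, then descend horizontally toward the valley and down the valley. You have correctly isolated the one real issue, but you overestimate its difficulty, and your proposed workaround (reflect again \emph{after} the descent lands on the upper path and restart) is both unnecessary and problematic: that second reflection sends $b^*\le\half$ back to $1-b^*\ge\half$, reversing the direction of the valley monotonicity, and you would be chasing your tail.

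The resolution the paper uses is a two-line observation. Write $\zeta=(c_2-c_1)/\mu\ge 0$. Alongside the cost symmetry $D(a,b)=D(1-a-\zeta,\,1-b)$ there is the purely combinatorial involution $\varphi(a,b)=(1-a,1-b)$ that swaps the two bins' allocations; $\varphi$ maps fractional points to fractional points, so if $(a_0,b_0)\in P$ then $(1-a_0,1-b_0)\in P$ and in particular lies on or above the bottom sorted path. The cost-symmetric point $(1-a_0-\zeta,\,1-b_0)$ sits to the \emph{left} of it (since $\zeta\ge 0$), hence is also above the bottom sorted path. Now run the descent from there. The crucial realization is that you never needed the trajectory to stay inside $P$: the function $D$ is defined on all of $[0,1]^2$, and both the horizontal move toward $m(b)$ and the valley move strictly decrease $D$ regardless of whether the current point is feasible. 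So if the horizontal move happens to cross the upper path you simply continue through it to the valley; you only stop when you meet the bottom sorted path. With that viewpoint there is no ``upper-path exit'' case, no second reflection, and none of the convex/concave bookkeeping you were anticipating. Your argument for the second claim (at most one split service, from the piecewise-linear structure of the bottom sorted path) is correct and is the intended reading.
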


\section{Simulation results}

In this section we present our simulation results for the two bin
and $k$ bin cases. We run our simulation on bins with equal
capacity. We compare the sorting algorithms for SP-MED, SP-MWOP and SP-MOP to
an algorithm we call BM (Balanced Mean). The algorithm goes through
the list, item by item, and allocates each item to the bin which is
less occupied, which is a natural benchmark and also much better
than other naive solutions like first-fit and first-fit decreasing.
\footnote{At first, we also wanted to compare our algorithm with
variants of the algorithms considered in \cite{WMZ11,BE12} for the
SBP problem. In both papers, the authors consider the algorithms
First Fit and First Fit Decreasing \cite{RGSJ79} with item  size
equal to the effective size, which is the mean value of the item
plus an extra value that guarantees an overflow probability be at
most some given value $p$. Their algorithm chooses an existing bin
when possible, and otherwise opens a new bin. However, when the
number of bins is fixed in advance, taking effective size rather
than size does not change much. For a new item (regardless of its
size or effective size) we keep choosing the bin that is less
occupied, but this time we measure occupancy with respect to
effective size rather than size. Thus, if elements come in a random
order, the net outcome of this is that the two bins are almost
balanced and a new item is placed in each bin with almost equal
probability.}

We show simulation results on synthetic data. We first generate our
stochastic input $\set{(\wt{\mu}^\ui,\wt{\sigma}^\ui)}_{i=1}^n$ for
$n=100$ and for $n=500$. Our sample space is a mixture of three populations: all
items have the same mean (we fixed it at $\wt{\mu}^\ui=500$) but
50\% had standard deviation picked uniformly from $[0, 0.1 \cdot
\widetilde{\mu}^\ui]$, 25\% had standard deviation picked uniformly from $[0.1 \cdot
\widetilde{\mu}^\ui, 0.5 \cdot \widetilde{\mu}^\ui]$ and 25\% had standard deviation picked
uniformly from $[0.5 \cdot \widetilde{\mu}^\ui, 0.9 \cdot
\widetilde{\mu}^\ui]$.

We then randomly generated $500$ sample values $x_l^\ui$ for each $1
\le i \le n$ and $1 \le l \le 500$ using the normal distribution
$N[\widetilde{\mu}^\ui,\widetilde{V}^\ui]$ and from this we inferred
parameters $\mu^\ui,V^\ui$, best explaining the sample as a normal
distribution. Both the sorting algorithm and the BM algorithm got as
input $\set{(\mu^\ui,V^\ui)}_{i=1}^n$, as well as
$c_1=\ldots=c_k=\frac{c}{k}$ and output their partition.

To check the suggested partitions, we viewed each sample $x_l^\ui$
as representing an item instantiation in a different time slot. We
then computed the cost function. For example, for SP-MED, the
deviation value for bin $j$ at time slot $l$ is: $\max \set{0, 100
\frac{\sum_{i \in S_j}x_l^\ui - c_j}{\sum_{i=1}^n \mu^\ui}}$, i.e.,
the deviation is measured as a percent of the total mean value
$\mu$. We generated $20$ such lists and calculated the average cost
for these $20$ input lists for each algorithm. We run the experiment
for different values of $c$.

Figure \ref{fig:2bins_cost} shows the $2$ bins
average cost of both algorithms for SP-MED, SP-MWOP and SP-MOP as a function of
$\frac{c}{\mu}$. As expected, the average
cost decreases as the value $\frac{c}{\mu}$ increases, i.e. as
the total spare capacity increases. We also see that the sorting
algorithm out-performs BM. Figure \ref{fig:2bins_cost_ratio}
shows the average cost of the BM algorithm divided by the average
cost of the sorting algorithm for the three problems, again as a function of
$\frac{c}{\mu}$.

\begin{figure}[t!]
  \begin{center}
  \includegraphics[width=4cm]{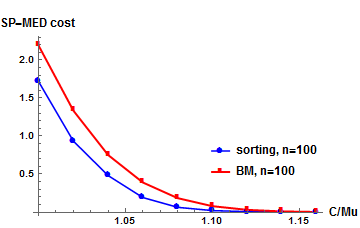}
  \includegraphics[width=4cm]{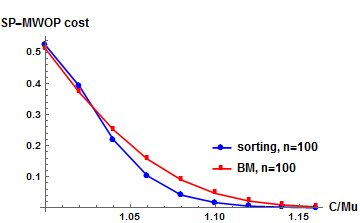}
  \includegraphics[width=4cm]{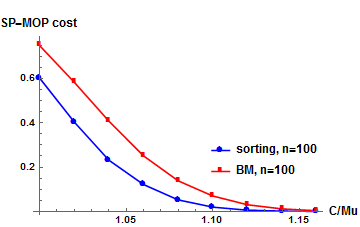}
  \end{center}
  \caption{Average cost of the sorting algorithm and the BM algorithm for SP-MED, SP-MWOP and SP-MOP with two bins. The $x$ axis measures $\frac{c}{\mu}$.}
  \label{fig:2bins_cost}
\end{figure}

\begin{figure}[t!]
  \begin{center}
  \includegraphics[width=4cm]{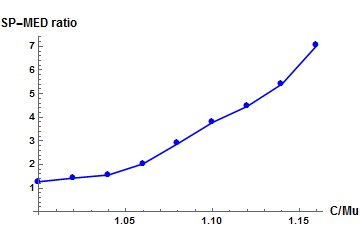}
  \includegraphics[width=4cm]{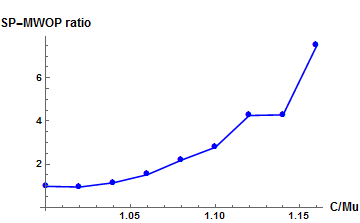}
  \includegraphics[width=4cm]{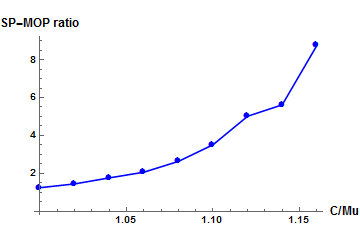}
  \end{center}
  \caption{Average cost of the BM algorithm divided by average cost of the sorting algorithm for SP-MED, SP-MWOP and SP-MOP with $2$ bins. The $x$ axis measures $\frac{c}{\mu}$.}
  \label{fig:2bins_cost_ratio}
\end{figure}

Figure \ref{fig:4bins} shows the $4$ bins
average cost and cost ratio of both algorithms for SP-MED as a function of
$\frac{c}{\mu}$ for $n=100$ and $n=500$. As we saw in the two bins case, the average
cost decreases as the value $\frac{c}{\mu}$ increases. We can also see that the average deviation for $n=100$ is higher than the deviation for $n=500$.
However, the ratio between the BM algorithm average cost and the sorting algorithms average cost for $n=100$ is lower than the ratio we get for $n=500$. Still, in both $n$ values, the sorting algorithm out-performs the BM.

\begin{figure}[t!]
  \begin{center}
  \includegraphics[width=5cm]{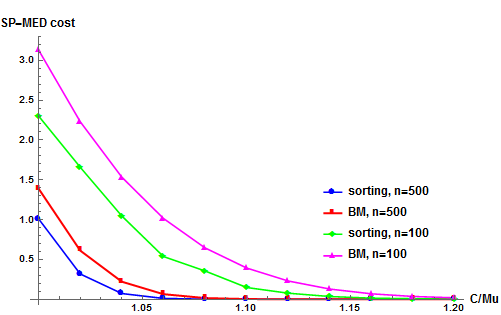}
  \includegraphics[width=5cm]{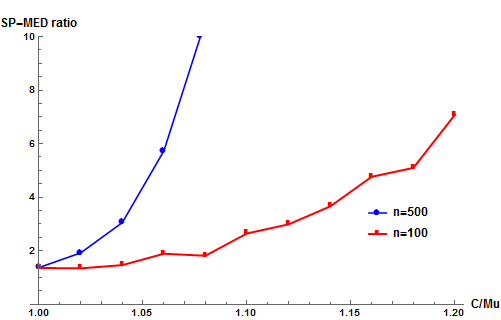}
  \end{center}
  \caption{The left figure depicts the average cost of the sorting algorithm and the BM algorithm for SP-MED with four bins, both for $n=100$ and for $n=500$. The right figure depicts the average cost of the BM algorithm divided by the average cost of the sorting algorithm for $n=100$ and for $n=500$. The $x$ axis measures $\frac{c}{\mu}$.}
  \label{fig:4bins}
\end{figure}

\bibliography{BibFile}

\begin{thebibliography}{1}

\bibitem{KRT00}
Kleinberg, J., Rabani, Y., Tardos, {\'E}.:
\newblock Allocating bandwidth for bursty connections.
\newblock SIAM Journal on Computing \textbf{30} (2000)  191--217

\bibitem{GI99}
Goel, A., Indyk, P.:
\newblock Stochastic load balancing and related problems.
\newblock In: Foundations of Computer Science, 1999. 40th Annual Symposium on,
  IEEE (1999)  579--586

\bibitem{WMZ11}
Wang, M., Meng, X., Zhang, L.:
\newblock Consolidating virtual machines with dynamic bandwidth demand in data
  centers.
\newblock In: INFOCOM, 2011 Proceedings IEEE, IEEE (2011)  71--75

\bibitem{BE12}
Breitgand, D., Epstein, A.:
\newblock Improving consolidation of virtual machines with risk-aware bandwidth
  oversubscription in compute clouds.
\newblock In: INFOCOM, 2012 Proceedings IEEE, IEEE (2012)  2861--2865

\bibitem{AT07}
Ajiro, Y., Tanaka, A.:
\newblock Improving packing algorithms for server consolidation.
\newblock In: Int. CMG Conference. (2007)  399--406

\bibitem{BKB07}
Bobroff, N., Kochut, A., Beaty, K.:
\newblock Dynamic placement of virtual machines for managing sla violations.
\newblock In: Integrated Network Management, 2007. IM'07. 10th IFIP/IEEE
  International Symposium on, IEEE (2007)  119--128

\bibitem{MN08}
Mehta, S., Neogi, A.:
\newblock Recon: A tool to recommend dynamic server consolidation in
  multi-cluster data centers.
\newblock In: Network Operations and Management Symposium, 2008. NOMS 2008.
  IEEE, IEEE (2008)  363--370

\bibitem{RGSJ79}
Johnson, D.:
\newblock Computers and intractability-a guide to the theory of np-completeness
  (1979)

\end{thebibliography}

\appendix
 
 \section{The general case}
\label{sec:k}
We now analyze the general $k$ bin case using the
results we have obtained for the two-bin case. We first show the
optimal \emph{factional} solution is obtained by first sorting the
bins by their capacities and the services by their VMR ratio, and
then partitioning the sorted list to $k$ consecutive parts.

Assume the bins are sorted by their capacity, $c_1 \le c_2 \le \ldots \le c_k$.
A fractional solution is called \emph{sorted} if for every $j < j'$ and every two services $i$ and $i'$ such that service $i$ (resp. $i'$) is allocated to bin $j$ (resp. $j'$) it holds that the VMR of service $i$ is at most that of service $i'$.

\begin{theorem}
\label{thm:k:sorted}
The optimal fractional solution is \emph{sorted}.
\end{theorem}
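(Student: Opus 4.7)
The plan is to reduce Theorem \ref{thm:k:sorted} to the two-bin case (Theorem \ref{thm:fractional}) via a pairwise exchange argument, exploiting property (\ref{it:kto2}) of admissible cost functions. Given an optimal fractional solution $(S_1,\ldots,S_k)$ for $k$ bins with $c_1 \le \cdots \le c_k$, I would fix any pair of indices $j<j'$ and consider the two-bin subproblem whose services are $S_j \cup S_{j'}$ and whose bin capacities are $c_j \le c_{j'}$. Property (\ref{it:kto2}) asserts that $(S_j,S_{j'})$ is an optimal fractional solution of this subproblem, so Theorem \ref{thm:fractional} applies directly.

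Theorem \ref{thm:fractional} tells us the optimal fractional point of the two-bin subproblem lies on the bottom sorted path of $S_j \cup S_{j'}$. Concretely, this means the allocation sorts these services by VMR in increasing order and assigns a prefix to the lower-capacity bin (namely bin $j$) and the suffix to the higher-capacity bin (bin $j'$), with at most one boundary service split between them. This instantly yields the pairwise sorted property: every service (fractionally) allocated to bin $j$ has VMR at most that of every service (fractionally) allocated to bin $j'$. Repeating this over all pairs $j<j'$ gives exactly the sorted condition of the theorem.

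The one subtlety to address is when $c_j = c_{j'}$. In that case the two-bin subproblem is symmetric under swapping its two bins, so Theorem \ref{thm:fractional} determines the optimal allocation only up to this swap: the low-VMR services might equally well sit in bin $j'$. However, because the capacities are equal, swapping $S_j$ and $S_{j'}$ leaves the total cost unchanged (this is property (\ref{it:symmetry}) for the pair), so I would simply relabel to place the low-VMR services in bin $j$. A related point is that Theorem \ref{thm:fractional} refers to \emph{the} optimal fractional point, an implicit uniqueness that follows from the strict unimodality in property (\ref{it:valley}) and the strict monotonicity along the valley in property (\ref{it:saddle}); any ambiguity in the minimizer can only occur along the sorted path itself, which is harmless for our conclusion.

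The main obstacle I anticipate is ensuring that the pairwise conclusions compose into a globally consistent sorted ordering across all $k$ bins. Since every pairwise constraint has the form ``VMRs in $S_j$ are $\le$ VMRs in $S_{j'}$ for $j<j'$,'' and bins are linearly ordered by capacity, the full collection of constraints is automatically transitively consistent and no conflict can arise. Hence the pairwise reduction alone suffices, and the proof is essentially a clean invocation of property (\ref{it:kto2}) plus Theorem \ref{thm:fractional} on every pair of bins.
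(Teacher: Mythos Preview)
Your proposal is correct and follows essentially the same approach as the paper: reduce to the two-bin case via property~(\ref{it:kto2}) and invoke Theorem~\ref{thm:fractional}. The paper phrases it as a one-line contradiction (an unsorted pair would violate optimality of the restricted two-bin problem), whereas you argue directly and add careful remarks on the $c_j=c_{j'}$ case and on global consistency, but the substance is identical.
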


\begin{proof}
Assume there is an optimal solution $S_1,\ldots,S_k$ that is \emph{not} sorted. I.e., there exist $j < j'$ and $i,i'$ such that service $i$ (resp. $i'$) is allocated to bin $j$ (resp. $j'$) and the VMR of service $i$ is strictly larger than that of service $i'$. By Theorem \ref{thm:fractional} the sorted fractional solution for the problem defined by bins $j$ and $j'$ is \emph{strictly better} than the one offered by the solution $S_1,\ldots,S_k$. This means the solution is not optimal for bins $j$ and $j'$ - in contradiction to item (\ref{it:kto2}) in Section \ref{sec:cost}.
\end{proof}

The fact that with two bins and different capacities $c_1 < c_2$, low risk services should be allocated to the smaller capacity bin (See Theorem \ref{thm:fractional} and the discussion before it) implies also that for $k$ bins lower risk services should be allocated to lower capacity bins.

Next we present an algorithmic framework for the problem:

\begin{boxit}
\bcaption{The double sorting algorithm framework}
\begin{itemize}
\item Sort the bins by their capacity $c_1 \le c_2 \le \ldots \le c_k$.
\item Sort the services by their VMR $\frac{v^\uone}{\mu^\uone} \le \frac{v^\utwo}{\mu^\utwo} \le \dots \le \frac{v^\un}{\mu^\un}$.
\item Use a partitioning algorithm to find $k-1$ partition points $\ell_0=0 \le \ell_1 < \ldots \ell_{k-1} \le \ell_k=n$ and output $S_1,\ldots,S_k$ where $S_j=\set{\ell_{j-1}+1,\ldots,\ell_j}$.
\item Allocate the services in $S_j$ to bin number $j$ (with capacity $c_j$).
\end{itemize}
\end{boxit}

The algorithmic framework is not complete since it does not specify
how to find the partition points on the sorted path. With two bins
there is only one partition point, and we can check all possible
$n-1$ integral partition points. With $k$ bins there are ${n \choose
k-1}$ possible partition points, and checking all partition points
may be infeasible.

We now describe a dynamic programming algorithm that solves the
problem, and works under very general conditions.

\begin{boxit}
\bcaption{Finding an optimal integral partition}
\mbox{ }

For each $1 \le t \le \log k$, $1 \le i < i' \le n$ and $j=1,\ldots,k$ keep the two values
$Partition(2^t,i,i',j)$ and $D(2^t,i,i',j)$ that are defined as follows:\\

\begin{itemize}
\item Base case $t=1$: $Partition(2,i,i',j)$ is the best integral partition point for the two bin problem with inputs $X_i,\ldots,X_{i'}$ and capacities $c_j,c_{j+1}$. $D(2,i,i',j)$ is its corresponding expected deviation.\\

\item Induction step. Suppose we have built the tables for $t$, we show how to build the tables for $t+1$. We let $Dev(2^{t+1},i,i',j)$ be
$$
\min_{i \le i'' \le i'}
\set{D(2^t,i,i''-1,j) + D(2^t,i'',i',j+2^t)}
$$
and we let $Partition(2^{t+1},i,i',j)$ be the partition point that obtains the minimum in the above equation.
\end{itemize}
\mbox{ }

The optimal partition to $k$ bins (assuming $k$ is a power of two) can be recovered from the tables. For example for $4$ bins $Partition(4,1,n,1)$ returns the middle point $\ell_2$ of the partition, and the partition points within each half are obtained by $\ell_1=Partition(2,1,\ell_2-1,1)$ and  $\ell_3=Partition(2,\ell_2,n,3)$.
\end{boxit}

It can be seen (by a simple induction) that $Partition(2^t,i,i',j)$ gives the middle point of the optimal integral solution on the sorted path to the problem of dividing $X_i,\ldots,X_{i'}$ to $2^t$ bins with capacities $c_j,c_{j+1},\ldots,c_{j+2^t-1}$. It can also be verified that the running time of the algorithm is $O(n^3 k \log k)$.

We analyze the error the integral solution we find has compared with the optimal fractional solution. The error analysis builds upon the error analysis of the two bin case. Start with an optimal fractional solution $\pi$. By Theorem \ref{thm:k:sorted} we know $\pi$ is sorted. $\pi$ defines some sorted fractional solution to the problem defined by the first two bins. By the remark after Theorem \ref{thm:error} we know that one of the integral points to the left or to the right of the fractional solution has a very close deviation to that of $\pi$, and we fix the first partition point accordingly to make it integral. Now assume after fixing $j$ stages we hold some fractional solution $\pi^\uj$ that is integral on the first $j$ partition points and possibly fractional on the rest. We look at bins $j+1$ and $j+2$ and find an integral partition point (to the left or to the right of the next fractional partition point) that is almost as good as the fractional one. Doing the above for $j=1,\ldots,k-1$ we end up with an integral solution whose deviation is close to that of the optimal fractional solution, where the error bound in Theorem \ref{thm:error} is multiplied by a factor of $k-1$. We remark that this also shows an efficient way of converting an optimal fractional solution to an integral solution close to it.

The algorithm is quite general and works whenever we can solve the $k=2$ case, and therefore may be seen as a reduction from arbitrary $k$ to the $k=2$ case. The main point we want to stress is that even in this generality the complexity is polynomial. The main disadvantage of the algorithm is that it runs in about $n^3$ time. For specific cases (such as SP-MED and SP-MOP and probably other natural variants) the optimal solution may be probably found much faster,
and we are currently working on proving the correctness of an almost linear algorithm.

\section{Three cost functions}
\label{sec:SP}

\subsection{Stochastic Packing with Minimum Expected Deviation (SP-MED)}

In Appendix \ref{app:spmed:dev_single_bin} we prove the expected
deviation of bin $j$ ($j=1,\dots,k$), denoted by $Dev_{S_j}$, is
$$Dev_{S_j}=
\sigma_j [\phi(\Delta_j) - \Delta_j (1-\Phi(\Delta_j))],$$ where
$\phi$ is the probability density function (pdf) of the standard
normal distribution and $\Phi$ is its cumulative distribution
function (CDF).\footnote{A quick background on the normal
distribution is given in Appendix
\ref{app:standard_normal_distribution}.} Denoting  $g(\Delta) =
\phi(\Delta) - \Delta (1-\Phi(\Delta) )$ we see that
$Dev_{S_j}=\sigma_j~ g(\Delta_j)$. With two bins $Dev$ is a function
from $[0,1]^2$ to $\R$ and $Dev(a,b)=\sigma_1 g(\Delta_1)+ \sigma_2
g(\Delta_2)$ where the first bin has mean $a\mu$ and variance $bV$,
the second bin has mean $(1-a)\mu$ and variance $(1-b)V$ and
$\sigma_j,\Delta_j$ are defined as above.

\begin{lemma}
$Dev$ respects conditions (\ref{it:kto2})-(\ref{it:saddle}) of
Section \ref{sec:cost}.
\end{lemma}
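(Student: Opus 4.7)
The plan is to verify the four conditions in turn, starting from the explicit formula
\begin{equation*}
Dev(a,b) = \sigma_1 g(\Delta_1) + \sigma_2 g(\Delta_2),
\end{equation*}
with $\sigma_1 = \sqrt{b}\,\sigma$, $\sigma_2 = \sqrt{1-b}\,\sigma$, $\Delta_1 = (c_1 - a\mu)/\sigma_1$, $\Delta_2 = (c_2 - (1-a)\mu)/\sigma_2$. Two elementary facts about $g$ carry all the analytic weight. First, using $\phi'(\Delta) = -\Delta\phi(\Delta)$ one gets $g'(\Delta) = -(1-\Phi(\Delta))$ and $g''(\Delta) = \phi(\Delta) > 0$, so $g$ is strictly convex. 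Second, substituting these back in yields the cancellation identity $g(\Delta) - \Delta g'(\Delta) = \phi(\Delta)$, which will be decisive for condition (\ref{it:saddle}).

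Condition (\ref{it:kto2}) is immediate since $Dev = \sum_j Dev_{S_j}$ is additive across bins, so reallocating services between $S_j$ and $S_{j'}$ leaves all other bins untouched. Condition (\ref{it:symmetry}) is a routine substitution: the map $(a,b) \mapsto (1 - a - (c_2-c_1)/\mu,\, 1 - b)$ carries $(\sigma_1,\Delta_1)$ to $(\sigma_2,\Delta_2)$ and vice versa, so $Dev$ is invariant. For condition (\ref{it:valley}), fix $b$. Then $\sigma_1, \sigma_2$ are constants in $a$ while $\Delta_1,\Delta_2$ are affine in $a$, so strict convexity of $g$ passes to $a \mapsto Dev(a,b)$, giving a unique minimizer $m(b)$; the boundary cases $b \in \{0,1\}$, where one bin becomes deterministic, are handled by the same argument since $\sigma_{3-j} g(\Delta_{3-j}(a))$ remains strictly convex.

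The crux is condition (\ref{it:saddle}). Differentiating gives
\begin{equation*}
\partial_a Dev(a,b) = \mu\bigl[\Phi(\Delta_2) - \Phi(\Delta_1)\bigr],
\end{equation*}
so the first-order condition pins the valley down as the locus $\Delta_1 = \Delta_2 =: \Delta^{*}$. Summing $\sigma_j \Delta_j = c_j - \mu_j$ over $j = 1, 2$ gives $\Delta^{*} = (c - \mu)/(\sigma_1 + \sigma_2) = \Delta / h(b)$, where $h(b) := \sqrt{b} + \sqrt{1-b}$, and hence
\begin{equation*}
Dev(m(b), b) = (\sigma_1 + \sigma_2)\, g(\Delta^{*}) = \sigma\, h(b)\, g\!\left(\Delta/h(b)\right) =: F(h(b)).
\end{equation*}

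It remains to show that $F \circ h$ has its unique maximum at $b = \half$. Differentiating $F(y) = \sigma y\, g(\Delta/y)$ and invoking the cancellation identity collapses the derivative to
\begin{equation*}
F'(y) = \sigma\bigl[g(\Delta/y) - (\Delta/y)\,g'(\Delta/y)\bigr] = \sigma\, \phi(\Delta/y) > 0,
\end{equation*}
so $F$ is strictly increasing. Meanwhile $h'(b) = \tfrac{1}{2\sqrt{b}} - \tfrac{1}{2\sqrt{1-b}}$ vanishes only at $b = \half$, is positive to its left and negative to its right, so $h$ attains its unique maximum there. Composing, $Dev(m(b), b)$ is strictly increasing on $[0, \half]$ and strictly decreasing on $[\half, 1]$; solving $\Delta_1 = \Delta_2$ at $b = \half$ (where $\sigma_1 = \sigma_2$) yields $m(\half) = \half - (c_2-c_1)/(2\mu)$, matching the claim. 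The only genuinely nontrivial step is the collapse $F'(y) = \sigma\phi(\Delta/y)$, which is exactly where the special structure of the normal distribution enters; everything else is bookkeeping.
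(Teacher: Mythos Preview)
Your proof is correct and follows essentially the same approach as the paper: both explicitly compute $Dev$ restricted to the valley $\Delta_1=\Delta_2$ and differentiate, relying on the same cancellation $g(\Delta)-\Delta g'(\Delta)=\phi(\Delta)$. The only cosmetic difference is that the paper parametrizes the valley value as $(c-\mu)\,g(\Delta_1)/\Delta_1$ and chains through $d\Delta_1/db$, whereas you parametrize it as $\sigma\,h(b)\,g(\Delta/h(b))$ and chain through $h'(b)$; since $\Delta_1=\Delta/h(b)$ on the valley these are the same computation, and your version is arguably cleaner because the sign of $h'(b)$ is immediate.
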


\begin{proof}
We go over the conditions one by one.
\begin{description}
\item [item (\ref{it:kto2}):] Let $S_1,\ldots,S_k$ be an optimal solution. Suppose there are
$j$ and $j'$ for which $S_j,S_{j'}$ are not the optimal solution for
the two-bin problem defined by the services in $S_j \cup S_{j'}$ and
capacities $c_j, c_{j'}$. Change the allocation of the solution
$S_1,\ldots,S_k$ on bins $j$ and $j'$ to an optimal solution for the
two bin problem. This change improves the expected total deviation
of the two bins while not affecting the expected deviation of any
other bin. In total we get a better solution, in contradiction to
the optimality of $S_1,\ldots,S_k$.


\item [item (\ref{it:symmetry}):] In Appendix
\ref{app:spmed:symmetry} we prove
$Dev(a,b)=Dev(1-a-\frac{c_2-c_1}{\mu}, 1-b)$. We remark that for
$c_1=c_2$ this simply says we can switch the names of the first and
second bin.

\item [item (\ref{it:valley}):] In Appendix \ref{app:spmed:partial_a_derivative} we
calculate $\frac{\partial^2 Dev}{\partial a^2} = \mu^2~
[\frac{\phi(\Delta_2)}{\sigma_2} + \frac{\phi(\Delta_1)}{\sigma_1}]
\ge 0$. It follows that for any $0 < b < 1$, $Dev(a)$ is convex and
has a unique minimum. The unique point $(m(b),b)$ on the valley is
the one where $\Delta_1 = \Delta_2$.

\item [item (\ref{it:saddle}):] We first explicitly determine what
$Dev$ restricted to the valley is as a function $D(b)=Dev(m(b),b)$
of $b$. As $Dev(a,b)=\sigma_1 g(\Delta_1)+\sigma_2 g(\Delta_2)$ and
on the valley $\Delta_1=\Delta_2$ we see that on the valley
$Dev(a,b)=(\sigma_1+\sigma_2) g(\Delta_1)$. However, $\sigma_1 +
\sigma_2$ also simplifies to
$\frac{c_1-a\mu}{\Delta_1}+\frac{c_2-(1-a)\mu}{\Delta_2}=\frac{c-\mu}{\Delta_1}$.
Altogether, we conclude that on the valley $Dev(a,b)=(c-\mu)
\frac{g(\Delta_1)}{\Delta_1}$ is a function of $\Delta_1$ alone.

Now it is a straight forward calculation that $\frac{\partial
Dev(\Delta_1)}{\partial \Delta_1} = - (c-\mu)~
\frac{\phi(\Delta_1)}{\Delta_1^2} < 0$.
Also $\frac{\partial \Delta_1}{\partial b}$ is negative when $b \le
\half$ and positive when $b \ge \half$ (see Appendix
\ref{app:spmed:partial_a_derivative} and
\ref{app:spmed:partial_b_derivative}). As $\frac{\partial
D}{\partial b}=\frac{\partial Dev}{\partial \Delta_1} \cdot
\frac{\partial \Delta_1}{\partial b}$, we see that $D(b)$ is
increasing for $b \le \half$ and decreasing for $b \ge \half$ as
claimed. The saddle point $(a= \half-\frac{c_2-c_1}{2\mu},
b=\frac{1}{2})$ lies on the valley and is a maximum point for $Dev$
restricted to the valley.

We remark that we could simplify the proof by using Lagrange
multipliers (as we do in Lemma \ref{lem:lagrange}). However, since here it
is easy to explicitly find $Dev$ restricted to the valley we prefer
the explicit solution. Later, we will not be able to explicitly find
the restriction to the valley and we use instead Lagrange
multipliers that solves the problem with an \emph{implicit}
description of the valley.
\end{description}
\end{proof}


\begin{lemma}
\label{lem:spmed:error} The difference between the expected
deviation in the integral point found by the sorting algorithm and
the optimal integral (or fractional) point for SP-MED is at most
$\frac{1}{\sqrt{2 \pi e}} \cdot \frac{1}{\alpha} \cdot \frac{L}{n}
\cdot \mu$. In particular, when $L=\littleo{n}$ and $\alpha$ is a
constant, the error is $\littleo{\mu}$.
\end{lemma}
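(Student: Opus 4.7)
The plan is to reduce the error analysis to bounding the variation of $Dev$ along a single segment of the bottom sorted path. By Theorem~\ref{thm:fractional} the optimal fractional solution $\pi^{*}$ lies on the bottom sorted path; either it coincides with one of the integer vertices $P^{[0]},\ldots,P^{[n]}$, in which case the sorting algorithm recovers it exactly, or it lies strictly inside one segment $[P^{[i-1]},P^{[i]}]$ corresponding to fractionally splitting a single service $i$ between the two bins. Since the algorithm evaluates $Dev$ at every vertex and outputs the minimizer, its error against $\pi^{*}$ (and therefore against the optimal integral solution, which is no better than $\pi^{*}$) is bounded by $\min\{Dev(P^{[i-1]})-Dev(\pi^{*}),\,Dev(P^{[i]})-Dev(\pi^{*})\}$, which is in turn bounded by the total variation of $Dev$ on that single segment.

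To bound this variation I would parametrize the segment as $P(t)=P^{[i-1]}+t(a^{(i)},b^{(i)})$ with $t\in[0,1]$, where $a^{(i)}=\mu^{(i)}/\mu$ and $b^{(i)}=V^{(i)}/V$, and compute
\[
\frac{d}{dt}Dev(P(t)) \;=\; \frac{\partial Dev}{\partial a}\,a^{(i)} + \frac{\partial Dev}{\partial b}\,b^{(i)}.
\]
Using the partial-derivative identities derivable in the spirit of Appendix~\ref{app:spmed:partial_a_derivative}, namely $\partial Dev/\partial a=\mu(\Phi(\Delta_{2})-\Phi(\Delta_{1}))$ and $\partial Dev/\partial b=\tfrac{V}{2}\bigl(\phi(\Delta_{1})/\sigma_{1}-\phi(\Delta_{2})/\sigma_{2}\bigr)$, I would bound the right-hand side uniformly on the segment. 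The factor $L/n$ enters through the hypothesis that no single service has $\mu^{(i)}/\mu$ or $V^{(i)}/V$ exceeding $L/n$; the constant $\tfrac{1}{\sqrt{2\pi e}}$ enters as a uniform bound on the normal pdf $\phi(\Delta_{j})$ appearing in the derivatives; and the factor $1/\alpha$ enters from controlling the reciprocal standard deviations $1/\sigma_{j}$ that multiply $\phi(\Delta_{j})$ in $\partial Dev/\partial b$, where $\alpha$ formalizes the assumption that the allocation keeps both bins with a non-negligible share of the variance. Multiplying the resulting pointwise bound by the parameter-length $1$ of the segment yields the claimed inequality, and $L=\littleo{n}$ with $\alpha=\Theta(1)$ immediately gives an error of $\littleo{\mu}$.

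The main technical obstacle is producing an honest uniform bound on $|dDev/dt|$: while $\partial Dev/\partial a$ is benign because $|\Phi(\Delta_{2})-\Phi(\Delta_{1})|\leq 1$, the term $\partial Dev/\partial b$ carries $1/\sigma_{j}$ factors that blow up whenever one bin holds essentially no variance. I would control this by invoking the valley condition $\Delta_{1}=\Delta_{2}$ established in the verification of conditions~(\ref{it:valley})--(\ref{it:saddle}) for $Dev$: along the sorted path the optimum sits on the valley, so both bins carry comparable shares of the variance, and the distance of the relevant segment from the degenerate corners is precisely what the parameter $\alpha$ quantifies. This reduction is what turns the apparently singular $1/\sigma_{j}$ factors into a tame $1/\alpha$ and delivers the stated error bound.
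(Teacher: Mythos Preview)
Your overall architecture matches the paper: you correctly reduce to Theorem~\ref{thm:error}, identify the segment $[P^{[i-1]},P^{[i]}]$ containing the fractional optimum, and aim to bound the gradient of $Dev$ there. But the analysis of the $b$--derivative has a genuine gap, and two of your key identifications are incorrect.

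First, $\frac{1}{\sqrt{2\pi e}}$ is \emph{not} a bound on $\phi$; the maximum of $\phi$ is $\frac{1}{\sqrt{2\pi}}$. The constant $\frac{1}{\sqrt{2\pi e}}$ is the maximum of $\Delta\,\phi(\Delta)$, attained at $\Delta=1$. The paper exploits this by rewriting
\[
\frac{\phi(\Delta_j)}{\sigma_j}\;=\;\frac{\Delta_j\,\phi(\Delta_j)}{c_j-\mu_j},
\]
which simultaneously kills the $1/\sigma_j$ singularity and produces the $\frac{1}{\sqrt{2\pi e}}$ factor. Second, $\alpha$ is defined (Appendix~\ref{sec:error}) as the \emph{relative spare capacity} $(c-\mu)/\mu$; it has nothing to do with variance shares. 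After the rewriting above the denominators are the per--bin spare capacities $c_j-\mu_j$, and since $(c_1-\mu_1)+(c_2-\mu_2)=c-\mu\geq\alpha\mu$, at least one of them is $\ge \tfrac{\alpha\mu}{2}$. The $\min$ in Theorem~\ref{thm:error} lets you pick that side, and then $V\le\mu^2$ finishes the bound.

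Your proposed fix via the valley condition does not work: $\Delta_1=\Delta_2$ equalizes the \emph{ratios} $(c_j-\mu_j)/\sigma_j$, not the variances, so it gives no lower bound on $b$ or $1-b$; and in any case the relevant gradient is evaluated at points $\xi_1,\xi_2$ on the segment, not at the valley. Replace your variance--share heuristic with the $\Delta\phi(\Delta)$ rewriting and the spare--capacity pigeonhole, and the argument goes through.
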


\begin{proof}
We know from Theorem \ref{thm:error} that the difference is at most
$$
\min \set{|\nabla D(\xi_1)| , |\nabla D(\xi_2)|} \frac{L}{n},$$
where $\xi_1 \in [O_1,OPT_f]$, $\xi_2 \in [OPT_f,O_2]$ and $O_1$ and
$O_2$ are the two points on the bottom sorted path between which
$OPT_f$ lies. Using the partial derivatives calculated in Appendix
\ref{app:spmed:partial_derivatives} we see that

\begin{eqnarray*}
| \nabla (\sigma_2 g)(\Delta_2) | & \leq & |\mu~ (1 -
\Phi(\Delta_2))| + | \frac{\sigma}{2 \sqrt{1-b}}~ \phi(\Delta_2) |
 ~ \le ~ \mu+\frac{\sigma}{2 \sqrt{1-b}}~ \phi(\Delta_2).
\end{eqnarray*}

Moreover, $\frac{\sigma}{2 \sqrt{1-b}}~
\phi(\Delta_2)=\frac{\sigma}{2 \sqrt{1-b}}~ \frac{1}{\Delta_2}~
\Delta_2 \phi(\Delta_2)$ and a simple calculation shows that the
function $\Delta \phi(\Delta)$ maximizes at $\Delta=1$ with value at
most $\frac{1}{\sqrt{2 \pi e}}$. By our assumption that $\Delta_j
\ge 0$ for every $j$, we get that

\begin{eqnarray*}
\frac{\sigma}{2 \sqrt{1-b}}~ \phi(\Delta_2) & \leq & \frac{\sigma}{2
\sqrt{1-b}}~ \frac{\sigma \sqrt{1-b}}{c_2 -(1-a)\mu}~
\frac{1}{\sqrt{2 \pi e}}  ~ \leq ~
\label{eqn:b2e} 
\frac{V}{2\sqrt{2 \pi e}} ~ \frac{1}{c_2 - (1-a)\mu}.
\end{eqnarray*}

Applying the same argument on $O_2$ shows the error can also be
bounded by $\frac{V}{2\sqrt{2 \pi e}} ~ \frac{1}{c_1 - a\mu}$.

However, $(c_1 - a\mu) + (c_2 - (1-a)\mu) = c - \mu$ which is the
total spare capacity, and at least one of the bins takes spare
capacity that is at least half of that, namely $\frac{c-\mu}{2}$.
Since the error is bounded by either term, we can choose the one
where the spare capacity is at least $\frac{c-\mu}{2}$ and we
therefore see that the error is at most $\frac{V}{2\sqrt{2 \pi e}} ~
\frac{2}{c-\mu}$. Since we assume $c-\mu \ge \alpha \mu$ for some
constant $\alpha>0$, the error is at most $\frac{V}{\sqrt{2 \pi e}}
~ \frac{1}{\alpha \mu}$. As we assume $V \le \mu^2$, $\frac{V}{\mu}
\le \mu$ which completes the proof.
\end{proof}

This shows the approximation factor goes to $1$ and linearly (in the
number of services) fast. Thus, from a practical point of view the
theorem is very satisfying.

\subsection{Stochastic Packing with Min Worst Overflow Probability (SP-MWOP)}

The SP-MWOP problem gets as input integers $k$ and $n$, specifying
the number of bins and services, integers $c_1, \ldots , c_k$,
specifying the bin capacities and values
$\set{(\mu^\ui,V^\ui)}_{i=1}^n$, specifying that the demand
distribution $X^\ui$ of service $i$ is normal with mean $\mu^\ui$
and variance $V^\ui$. A solution to the problem is a partition of
$[n]$ to $k$ disjoint sets $S_1,\ldots,S_k \subseteq [n]$ that
minimizes the worst overflow probability.

The SP-MWOP problem is a natural variant of SBP. For a given
partition let $OFP_j$ (for $j=1,\ldots,k$) denote the overflow
probability of bin $j$. Let $WOFP$ denote  the \emph{worst} overflow
probability, i.e., $WOFP = \max_{j=1}^k \set{OFP_j}$. In the SBP
problem we are given $n$ normal distributions and wish to pack them
into few bins such that the $OFP \le k$ for some given parameter
$p$. Suppose we solve the SBP problem for a given $p$ and know that
$k$ bins suffice. We now ask ourselves what is the minimal $WOFP$
achieved with the $k$ bins (this probability is clearly at most $p$
but can also be significantly smaller). We also ask what is the
partition that achieves this minimal worst overflow probability. The
problem SP-MOP does exactly that.

In Appendix \ref{app:spmwop:single_bin} we prove the overflow
probability of bin $j$ ($j=1,\dots,k$), is $OFP_j=1-\Phi(\Delta_j)$
where $\Delta_j=\frac{c_j - \mu_j}{\sigma_j}$. Thus, $$WOFP  =
\max_{j=1}^k \set{1-\Phi(\Delta_j)}.$$  With two bins $WOFP$ is a
function from $[0,1]^2$ to $\R$ and $WOFP(a,b)=\max
\set{1-\Phi(\Delta_1),1-\Phi(\Delta_2)}$ where the first bin has
mean $a\mu$ and variance $bV$, the second bin has mean $(1-a)\mu$
and variance $(1-b)V$ and $\sigma_j,\Delta_j$ are defined as above.

\begin{lemma}
$WOFP$ respects conditions (\ref{it:kto2})-(\ref{it:saddle}) of
Section \ref{sec:cost}.
\end{lemma}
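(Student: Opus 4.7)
The plan is to verify each of (\ref{it:kto2})--(\ref{it:saddle}) in turn, with the main subtlety being that $WOFP$ is a maximum rather than a sum. I would dispose of (\ref{it:symmetry}) first by direct substitution: plugging $(a,b)\mapsto(1-a-(c_2-c_1)/\mu,\,1-b)$ into the definitions of $\Delta_1,\Delta_2$ gives $\Delta_1'=\Delta_2$ and $\Delta_2'=\Delta_1$ by routine algebra using $\mu_1'=\mu-a\mu-(c_2-c_1)$ and $\sigma_1'=\sqrt{(1-b)V}$, and since $WOFP$ is symmetric in its two arguments it is invariant under this substitution.

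For (\ref{it:valley}) I would fix $b\in(0,1)$ and observe that $\Delta_1(a)=(c_1-a\mu)/\sqrt{bV}$ is strictly decreasing in $a$ while $\Delta_2(a)=(c_2-(1-a)\mu)/\sqrt{(1-b)V}$ is strictly increasing. Composing with the strictly increasing $\Phi$ and taking complements turns these into a strictly increasing function $a\mapsto 1-\Phi(\Delta_1(a))$ and a strictly decreasing function $a\mapsto 1-\Phi(\Delta_2(a))$. The pointwise maximum of a strictly increasing and a strictly decreasing continuous function on an interval is strictly unimodal and attains its unique minimum exactly at the crossing point, i.e., at the unique $a=m(b)$ where $\Delta_1(a)=\Delta_2(a)$. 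This is markedly cleaner than the SP-MED case because no second derivatives are needed.

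For (\ref{it:saddle}) I would use the explicit description of the valley. On the set $\{\Delta_1=\Delta_2=:\Delta\}$, adding the two identities $c_j-\mu_j=\Delta\sigma_j$ for $j=1,2$ yields $\Delta=(c-\mu)/(\sigma_1+\sigma_2)=(c-\mu)/(\sqrt{V}(\sqrt{b}+\sqrt{1-b}))$, so the restriction of $WOFP$ to the valley equals $1-\Phi(\Delta(b))$ and depends on $b$ alone. Since $1-\Phi$ is strictly decreasing and $\sqrt{b}+\sqrt{1-b}$ is strictly concave in $b$ with unique maximum at $b=1/2$, the restriction attains its strict maximum at $b=1/2$. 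Combined with $\Delta_1=\Delta_2$ at $b=1/2$ this pins the saddle point at $a=1/2-(c_2-c_1)/(2\mu)$, as required.

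The one step that does not literally repeat the SP-MED argument is (\ref{it:kto2}), because improving a two-bin pair inside a globally optimal solution need not help a max that is attained elsewhere. The fix is to note that any pairwise replacement makes $\max(OFP_j,OFP_{j'})$ weakly decrease, and hence cannot raise the global max; iterating yields an equally optimal solution that is pairwise optimal, which is exactly what the framework needs for the reduction from $k$ bins to two. So the main obstacle is really just interpreting (\ref{it:kto2}) in the weak ``among the optima'' sense, after which the rest of the proof is short and self-contained.
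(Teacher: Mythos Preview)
Your treatment of (\ref{it:symmetry}), (\ref{it:valley}) and (\ref{it:saddle}) is essentially the paper's proof: the paper also establishes (\ref{it:valley}) by monotonicity of $OFP_1$ and $OFP_2$ in $a$ (it phrases this via the partial derivatives $\frac{\partial OFP_j}{\partial a}$ rather than your direct monotonicity of $\Delta_j$, but the content is the same), and for (\ref{it:saddle}) it derives the same closed form $\Delta=(c-\mu)/\bigl(\sigma(\sqrt{b}+\sqrt{1-b})\bigr)$ on the valley and reads off the maximum at $b=\tfrac12$.

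The one genuine divergence is (\ref{it:kto2}). The paper argues exactly as for SP-MED: if $(S_j,S_{j'})$ is not pairwise optimal, re-allocate that pair to get a strictly better global solution, contradicting optimality. You are right that this step does not go through literally for a $\max$ objective: strictly lowering $\max(OFP_j,OFP_{j'})$ need not strictly lower $\max_\ell OFP_\ell$ when the worst bin lies elsewhere, so condition~(\ref{it:kto2}) as stated (``\emph{every} optimal solution is pairwise optimal'') can fail for $WOFP$. Your repair---pairwise re-optimization never increases the global $\max$, hence there is \emph{some} optimal solution that is pairwise optimal, and this weak form is all Theorem~\ref{thm:k:sorted} actually uses---is the correct fix; the paper simply does not flag the issue. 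The only loose end in your write-up is the word ``iterating'': repeated pairwise replacements need a termination or compactness argument (e.g., among the compact set of $WOFP$-optima, pick one minimizing a strictly improving secondary criterion) to guarantee you actually land on a pairwise-optimal solution rather than cycle.
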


\begin{proof}
We go over the conditions one by one.
\begin{description}
\item [item (\ref{it:kto2}):] Let $S_1,\ldots,S_k$ be an optimal solution. Suppose there are
$j$ and $j'$ for which $S_j,S_{j'}$ are not the optimal solution for
the two-bin problem defined by the services in $S_j \cup S_{j'}$ and
capacities $c_j, c_{j'}$. Change the allocation of the solution
$S_1,\ldots,S_k$ on bins $j$ and $j'$ to an optimal solution for the
two bin problem. This change improves the worst overflow probability
of the two bins while not affecting the overflow probability of any
other bin. In total we get a better solution, in contradiction to
the optimality of $S_1,\ldots,S_k$.


\item [item (\ref{it:symmetry}):] The same proof as in Appendix
\ref{app:spmed:symmetry} shows
$WOFP(a,b)=WOFP(1-a-\frac{c_2-c_1}{\mu}, 1-b)$.

\item [item (\ref{it:valley}):]
Fix $b$. Denote $OFP_{1}(a,b)=OFP_{S_1}(a\mu,bV)=1-\Phi(\Delta_1)$.
It is a simple calculation that $\frac{\partial OFP_{1}}{\partial
a}(a,b)=\frac{\mu}{\sqrt{b}\sigma} \cdot \phi(\Delta_1) > 0$.
Similarly, if $OFP_{2}(a,b)$ denotes the overflow probability in the
second bin when the first bin has total mean $a\mu$ and total
variance $bV$, then  $\frac{\partial OFP_{2}}{\partial
a}=\frac{-\mu}{\sqrt{1-b}\sigma} \cdot \phi(\Delta_2) < 0$. Thus,
$OFP_{1}$ is monotonically increasing in $a$ and $OFP_{2}$ is
monotonically decreasing in $a$, and therefore there is a unique
minimum for $OFP(a,b)$ (when $b$ is fixed and $a$ is free) that is
obtained when $OFP_{1}(a,b)=OFP_{2}(a,b)$, i.e., when
$\Delta_1=\Delta_2$.

\item [item (\ref{it:saddle}):] We first explicitly determine what
$WOFP$ restricted to the valley is as a function $D(b)=WOFP(m(b),b)$
of $b$. From before we know that on the valley $\Delta_1=\Delta$.
Therefore, following the same reasoning as in the SP-MED case,
\begin{eqnarray*}
D(b) &=& \frac{c-\mu}{\sigma}\frac{1}{\sqrt{b}+\sqrt{1-b}}.
\end{eqnarray*}
It follows that $D(b)$ is monotonically decreasing in $b$ for $b \le
\frac{1}{2}$ and increasing otherwise. The maximal point is obtained
in the saddle point that is the center of the symmetry.
\end{description}
\end{proof}



\begin{lemma}
\label{lem:spmwop:error} The difference between minimal worst
overflow probability in the integral point found by the sorting
algorithm and the optimal integral (or fractional) point for SP-MWOP
is at most $O(\frac{L}{\alpha n})$. In particular, when
$L=\littleo{n}$ and $\alpha$ is a constant, the difference is
$\littleo{1}$.
\end{lemma}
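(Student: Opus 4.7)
The plan is to mirror the approach of Lemma~\ref{lem:spmed:error}. Applying Theorem~\ref{thm:error} reduces the question to bounding $\min\{|\nabla WOFP(\xi_1)|,|\nabla WOFP(\xi_2)|\}\cdot L/n$, where $\xi_1\in[O_1,OPT_f]$ and $\xi_2\in[OPT_f,O_2]$ are mean-value points along the sorted path flanking $OPT_f$. Since the $L/n$ factor is already explicit, the whole task is to prove that this minimum gradient norm is $O(1/\alpha)$.

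First I would identify the dominant bin on each side of $OPT_f$. Parameterizing the sorted path by $a$, both $\mu_1=a\mu$ and $V_1=bV$ increase monotonically, so $\Delta_1$ strictly decreases and $f_1=1-\Phi(\Delta_1)$ strictly increases, while by symmetry $f_2$ strictly decreases. Hence $f_1=f_2$ at a unique point on the path, and because $f_1=f_2\Leftrightarrow\Delta_1=\Delta_2$ this point lies on the valley and therefore coincides with $OPT_f$. Consequently $WOFP=f_2$ on $[O_1,OPT_f]$ and $WOFP=f_1$ on $[OPT_f,O_2]$, so on the interiors $WOFP$ is smooth and $\nabla WOFP$ equals the gradient of the dominant $f_j$.

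Next I would use the partial derivative computation already carried out for item~(\ref{it:valley}) of the SP-MWOP lemma, namely $|\partial f_j/\partial a|=\phi(\Delta_j)\mu/\sqrt{V_j}$ and $|\partial f_j/\partial b|=\phi(\Delta_j)\Delta_j/(2 b_j)$. Substituting the elementary bound $|\Delta|\phi(\Delta)\leq 1/\sqrt{2\pi e}$ (valid because $\Delta_j\geq\Delta^*\geq 0$ on the relevant segment, since $c\geq\mu$) yields
\begin{equation*}
\left|\frac{\partial f_j}{\partial a}\right| \;\leq\; \frac{\mu}{(c_j-\mu_j)\sqrt{2\pi e}},\qquad
\left|\frac{\partial f_j}{\partial b}\right| \;\leq\; \frac{1}{2 b_j \sqrt{2\pi e}}.
\end{equation*}
Thus the $a$-partial is controlled by the absolute spare capacity of bin~$j$, while the $b$-partial is controlled by its variance share.

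Finally I would split into cases using a single valley identity: at $OPT_f$, $c_j-\mu_j=\sqrt{b_j V}\,\Delta^*$, and a direct calculation gives $b_j\geq\half\Leftrightarrow c_j-\mu_j\geq(c-\mu)/2\geq\alpha\mu/2$. Hence the bin with the larger variance share is simultaneously the bin with the larger absolute spare capacity. If $b^*\leq\half$, then at $\xi_1$ bin~$2$ is the dominant one and both $1-b\geq\half$ and $c_2-(1-a)\mu\geq\alpha\mu/2$ hold (up to $O(L/n)$ perturbations of $OPT_f$, negligible in the claimed asymptotics), giving $|\nabla WOFP(\xi_1)|\leq 2/(\alpha\sqrt{2\pi e})+1/\sqrt{2\pi e}=O(1/\alpha)$; if $b^*\geq\half$ the symmetric computation at $\xi_2$ produces the same bound. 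The minimum is therefore $O(1/\alpha)$, and the error is at most $O(L/(\alpha n))$. The main obstacle is precisely the $b$-partial, whose bound grows like $1/b_j$ and so admits no uniform control along the whole sorted path; it is the $\xi_1$--$\xi_2$ duality combined with the valley identity that picks out the favorable side.
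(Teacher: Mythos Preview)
Your proposal is essentially the paper's own proof: invoke Theorem~\ref{thm:error}, identify that $WOFP=f_2$ on $[O_1,OPT_f]$ and $WOFP=f_1$ on $[OPT_f,O_2]$, rewrite $\frac{\mu}{\sigma_j}\phi(\Delta_j)$ as $\frac{\mu}{c_j-\mu_j}\,\Delta_j\phi(\Delta_j)$, apply $|\Delta\phi(\Delta)|\le 1/\sqrt{2\pi e}$, and then argue that on at least one of the two sides the dominant bin carries both spare capacity $\gtrsim\alpha\mu$ and variance share $\gtrsim\tfrac12$. Your use of the valley identity ($b_j\ge\tfrac12\Leftrightarrow c_j-\mu_j\ge(c-\mu)/2$ at $OPT_f$) to decide which of $\xi_1,\xi_2$ to use is in fact a cleaner version of the paper's somewhat terse sentence ``for that term, the spare capacity is maximal, and therefore it takes at least half of the variance.''

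One small correction: the parenthetical ``valid because $\Delta_j\ge\Delta^*\ge 0$ on the relevant segment'' has the inequality reversed. On $[O_1,OPT_f]$ the dominant bin is bin~$2$, and since $a_1\le a^*$, $b_1\le b^*$ you get $\Delta_2(\xi_1)\le\Delta^*$, not $\ge$. This is harmless for two reasons: the bound $|\Delta|\phi(\Delta)\le 1/\sqrt{2\pi e}$ holds for all real~$\Delta$, so no sign condition is needed there; and the positivity you actually need, namely $c_j-\mu_j>0$ so that the rewriting $\frac{\mu}{\sigma_j}\phi(\Delta_j)=\frac{\mu}{c_j-\mu_j}\Delta_j\phi(\Delta_j)$ yields a meaningful upper bound, is established by your subsequent case analysis (at $OPT_f$ the favored bin has spare capacity $\ge(c-\mu)/2\ge\alpha\mu/2$, and the $O(L\mu/n)$ perturbation to $\xi_1$ or $\xi_2$ leaves it positive). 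Just delete or fix that parenthetical and the argument is complete.
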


\begin{proof}
We know from Theorem \ref{thm:error} that the difference is at most
$$
\min \set{|\nabla D(\xi_1)| , |\nabla D(\xi_2)|} \frac{L}{n},$$
where $\xi_1=(a_1,b_1) \in [O_1,OPT_f]$, $\xi_2=(a_2,b_2) \in
[OPT_f,O_2]$ and $O_1$ and $O_2$ are the two points on the bottom
sorted path between which $OPT_f$ lies, and notice that even though
$WOFP$ is not differentiable when $\Delta_1=\Delta_2$, it is
differentiable everywhere else. We now use the partial derivatives
calculated in Appendix \ref{app:spmed:partial_derivatives}.
%
%
We also replace $\frac{\phi(\Delta_2)}{\sigma_2}$ with
$\frac{\Delta_2 \phi(\Delta_2)}{c_2-(1-a)\mu}$ and similarly for the
other term. We get:

$$
\min \set{|\Delta_2 \phi(\Delta_2)| \cdot
|(\frac{\mu}{c_2-(1-a_1)\mu}, \frac{1}{2(1-b_1)})|,|\Delta_1
\phi(\Delta_1)| \cdot |(\frac{\mu}{c_1-a_2\mu},\frac{1}{2b_2})|}
\frac{L}{n}.$$

$\Delta \phi(\Delta)$ maximizes at $\Delta=1$ with value at most
$\frac{1}{\sqrt{2 \pi e}}$. Also, $(c_1 - a_2\mu) + (c_2 -
(1-a_1)\mu) = c - \mu -(a_2-a_1)\mu \ge c-\mu \frac{L}{n} \ge
\frac{\alpha}{2}\mu$, where $\alpha$ is the total space capacity,
and a constant by our assumption. Hence, at least one of the terms
$\frac{\mu}{c_2-(1-a_1)\mu}$,$\frac{\mu}{c_1-a_2\mu}$ is at most
$\frac{4}{\alpha}$. Also, for that term, the spare capacity is
maximal, and therefore it takes at least half of the variance.
Altogether, the difference is at most $O(\frac{L}{\alpha n})$ which
completes the proof.
\end{proof}

\subsection{Stochastic Packing with Minimum Overflow Probability (SP-MOP)}

The SP-MOP problem gets as input integers $k$ and $n$, specifying
the number of bins and services, integers $c_1, \ldots , c_k$,
specifying the bin capacities and values
$\set{(\mu^\ui,V^\ui)}_{i=1}^n$, specifying that the demand
distribution $X^\ui$ of service $i$ is normal with mean $\mu^\ui$
and variance $V^\ui$. A solution to the problem is a partition of
$[n]$ to $k$ disjoint sets $S_1,\ldots,S_k \subseteq [n]$ that
minimizes the overflow probability.

The total overflow probability is $OFP=1-\prod_{j=1}^k (1-OFP_j)$
where as we computed before (in Appendix
\ref{app:spmwop:single_bin}) $OFP_j=1-\Phi(\Delta_j)$. With two bins
$OFP$ is a function from $[0,1]^2$ to $\R$ and
$OFP(a,b)=1-\Phi(\Delta_1)\Phi(\Delta_2)$ where the first bin has
mean $a\mu$ and variance $bV$, the second bin has mean $(1-a)\mu$
and variance $(1-b)V$ and $\sigma_j,\Delta_j$ are defined as above.

\begin{lemma}
$OFP$ respects conditions (\ref{it:kto2})-(\ref{it:valley}) of
Section \ref{sec:cost}.
\end{lemma}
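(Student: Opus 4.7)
The plan is to verify the three required properties of $OFP(a,b) = 1 - \Phi(\Delta_1)\Phi(\Delta_2)$ in turn, with items (\ref{it:kto2}) and (\ref{it:symmetry}) being essentially mechanical and (\ref{it:valley}) containing the real work. The crucial structural fact is the multiplicative decomposition $1 - OFP = \prod_{l=1}^{k}(1 - OFP_l)$, which plays the role that additivity of $Dev$ played for SP-MED.

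For item (\ref{it:kto2}), I will exploit precisely this decomposition: changing only the allocation between two bins $j, j'$ affects only the factors $(1 - OFP_j)(1 - OFP_{j'})$ and leaves the rest of the product unchanged. Hence maximizing the global product (with the allocations on the other bins held fixed) is exactly equivalent to maximizing $(1-OFP_j)(1-OFP_{j'})$, i.e., to minimizing the two-bin $OFP$. So any globally optimal partition must be two-bin optimal on every pair, by the same contradiction argument that was used for SP-MED and SP-MWOP. For item (\ref{it:symmetry}), a direct substitution $a' = 1 - a - \frac{c_2 - c_1}{\mu}$, $b' = 1 - b$ gives $c_1 - a'\mu = c_2 - (1-a)\mu$ and $\sqrt{b'V} = \sqrt{(1-b)V}$, so $\Delta_1(a',b') = \Delta_2(a,b)$ and symmetrically $\Delta_2(a',b') = \Delta_1(a,b)$; since $OFP$ depends on $(\Delta_1,\Delta_2)$ symmetrically, $OFP(a,b) = OFP(a',b')$ is immediate.

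For item (\ref{it:valley}), fix $b \in (0,1)$ and study $a \mapsto OFP(a,b)$. Minimizing $OFP$ is equivalent to maximizing $\log\Phi(\Delta_1(a)) + \log\Phi(\Delta_2(a))$. Both $\Delta_1(a) = (c_1 - a\mu)/\sqrt{bV}$ and $\Delta_2(a) = (c_2 - (1-a)\mu)/\sqrt{(1-b)V}$ are nonconstant affine functions of $a$, so it suffices to show that $\log \Phi$ is strictly concave on all of $\R$; then each summand is a strictly concave function of $a$, their sum is strictly concave on $[0,1]$, and hence admits a unique maximizer $m(b)$, with $OFP(\cdot,b)$ strictly decreasing on $[0,m(b)]$ and strictly increasing on $[m(b),1]$. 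The boundary cases $b \in \{0,1\}$ reduce to a single-bin problem in the obvious limit and can be handled separately.

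The main obstacle is therefore establishing strict log-concavity of the normal CDF. Differentiating twice one obtains $(\log \Phi)''(x) = -r(x)\bigl(x + r(x)\bigr)$, where $r(x) = \phi(x)/\Phi(x)$ is the reverse hazard rate. For $x \geq 0$ both factors are positive and the claim is immediate. For $x < 0$ one invokes the classical Mills-type inequality $\phi(x) > |x|\,\Phi(x)$, i.e.\ $r(x) > |x|$, which guarantees $x + r(x) > 0$; this inequality is standard and follows, for example, by integration by parts in $\Phi(x) = \int_{-\infty}^{x}\phi(t)\,dt$ using $\phi'(t) = -t\phi(t)$, so I would cite it rather than reprove it. Once this property of $\Phi$ is in hand, items (\ref{it:kto2})--(\ref{it:valley}) all drop out; the more subtle saddle-point condition (\ref{it:saddle}) is presumably handled separately via Lagrange multipliers, as flagged earlier in the paper, since an explicit parametrization of the valley is not available for SP-MOP.
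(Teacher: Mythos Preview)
Your treatment of items (\ref{it:kto2}) and (\ref{it:symmetry}) is essentially identical to the paper's: the same multiplicative decomposition $1-OFP=\prod_j(1-OFP_j)$ followed by the local–improvement contradiction for the first, and the same substitution swapping $(\Delta_1,\Delta_2)$ for the second.

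For item (\ref{it:valley}) you take a genuinely different route. The paper computes the second partial derivative
\[
\frac{\partial^2 OFP}{\partial a^2}
=\mu^2\Bigl[\tfrac{\Delta_1}{\sigma_1^2}\phi(\Delta_1)\Phi(\Delta_2)
+\tfrac{\Delta_2}{\sigma_2^2}\phi(\Delta_2)\Phi(\Delta_1)
+\tfrac{2}{\sigma_1\sigma_2}\phi(\Delta_1)\phi(\Delta_2)\Bigr]
\]
and asserts it is positive, i.e.\ it claims strict \emph{convexity} of $OFP(\cdot,b)$. You instead pass to $\log\Phi(\Delta_1)+\log\Phi(\Delta_2)$ and invoke strict log-concavity of $\Phi$; since each $\Delta_i$ is nonconstant affine in $a$, the sum is strictly concave, giving a unique maximizer $m(b)$ and hence strict unimodality of $OFP(\cdot,b)$, which is exactly what condition (\ref{it:valley}) demands. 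Your argument is the more robust of the two: the paper's displayed expression is only \emph{manifestly} positive when $\Delta_1,\Delta_2\ge 0$ (the first two bracketed terms carry the signs of $\Delta_1,\Delta_2$), whereas the log-concavity route yields unimodality for all $a\in[0,1]$ without any sign restriction. It is also economical, since the paper itself invokes log-concavity of $\Phi$ in the very next step (the proof of condition (\ref{it:saddle}) via Lemma~\ref{lem:OFP}); you are simply bringing that tool forward by one lemma.
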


\begin{proof}
We go over the conditions one by one.
\begin{description}
\item [item (\ref{it:kto2}):] Let $S_1,\ldots,S_k$ be an optimal solution. Suppose there are
$j$ and $j'$ for which $S_j,S_{j'}$ are not the optimal solution for
the two-bin problem defined by the services in $S_j \cup S_{j'}$ and
capacities $c_j, c_{j'}$. Change the allocation of the solution
$S_1,\ldots,S_k$ on bins $j$ and $j'$ to an optimal solution for the
two bin problem. This change improves $(1-OFP_j)(1-OFP_{j'})$ while
not affecting the overflow probability of any other bin. In total we
get a better solution, in contradiction to the optimality of
$S_1,\ldots,S_k$.


\item [item (\ref{it:symmetry}):] The same proof as in Appendix
\ref{app:spmed:symmetry} shows
$OFP(a,b)=OFP(1-a-\frac{c_2-c_1}{\mu}, 1-b)$.

\item [item (\ref{it:valley}):]
Fix $b$.
\begin{eqnarray*}
\frac{\partial^2 OFP}{\partial^2 a} & = & \mu^2 [
\frac{\Delta_1}{\sigma_1^2} \phi(\Delta_1) \Phi(\Delta_2) +
\frac{\Delta_2}{\sigma_2^2} \phi(\Delta_2) \Phi(\Delta_1) +
\frac{2}{\sigma_1 \sigma_2} \phi(\Delta_1)\phi(\Delta_2)].
\end{eqnarray*}
In particular $\frac{\partial^2 OFP}{\partial^2 a} >0$ and for every
fixed $b$, $OFP(a,b)$ is convex over $a \in [0..1]$ and has a unique
minimum $a=m(b)$.
\end{description}
\end{proof}.

Proving there exists a unique maximum over the valley is more
challenging. We wish to find all extremum points of the cost
function $D$ ($OFP$ in our case) over the valley $\set{(m(b),b)}$.
Define $m(a,b)=a-m(b)$. Then we wish to maximize $D(a,b)$ subject to
$m(a,b)=0$. Before, we computed the restriction $D(b)$ of the cost
function over the valley and found its extremum points. However,
here we do not know how to explicitly find $D(b)$. Instead, we use
Lagrange multipliers that allow working with the implicit form
$m(a,b)=0$ without explicitly finding $D(b)$. We prove a general
result:

\begin{lemma}
\label{lem:lagrange} If a cost function $D$ is differentiable twice
over $[0,1] \times [0,1]$ and respects conditions
(\ref{it:kto2})-(\ref{it:saddle}) of Section \ref{sec:cost}, and if
for every $b \in [0,1]$, $D(a,b)$ is strictly convex for $a \in
[0,1]$, then any extremum point of $D$ over the valley must have
zero gradient at $Q$, i.e., $\nabla(D)(Q)=0$.
\end{lemma}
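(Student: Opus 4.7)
The plan is to apply the method of Lagrange multipliers with constraint $g(a,b) := a - m(b) = 0$, and exploit the fact that on the valley itself the partial derivative $\partial D/\partial a$ already vanishes by the very definition of $m(b)$.

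First I would verify that the valley is a smooth curve, so that the Lagrange setup is legitimate. Since $D$ is $C^2$ and $D(\cdot,b)$ is strictly convex, the minimizer $m(b)$ is characterized implicitly by $\frac{\partial D}{\partial a}(m(b),b)=0$. Because $\frac{\partial^2 D}{\partial a^2}(m(b),b)>0$ (strict convexity), the implicit function theorem guarantees that $m$ is $C^1$ on $[0,1]$ with $m'(b) = -\frac{\partial^2 D/\partial a \partial b}{\partial^2 D/\partial a^2}$ evaluated on the valley. In particular, $\nabla g(Q) = (1,-m'(b^*)) \neq 0$ at any point $Q=(a^*,b^*)$ on the valley, so the Lagrange multiplier theorem applies.

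Next, at any extremum $Q$ of $D$ restricted to the valley there exists $\lambda \in \R$ with $\nabla D(Q) = \lambda\, \nabla g(Q) = \lambda\,(1,-m'(b^*))$. The key observation is that the first coordinate of $\nabla D(Q)$ vanishes automatically: since $Q$ lies on the valley, $a^* = m(b^*)$, and by the definition of $m$ as the minimizer of $D(\cdot,b^*)$ we have $\frac{\partial D}{\partial a}(Q) = 0$. Reading off the first coordinate of the Lagrange identity gives $0 = \lambda \cdot 1$, hence $\lambda = 0$. Substituting back into the second coordinate yields $\frac{\partial D}{\partial b}(Q) = -\lambda\, m'(b^*) = 0$. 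Therefore $\nabla D(Q) = 0$, as claimed.

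I do not anticipate a serious obstacle. The only subtle point is justifying smoothness of $m(b)$ so that the constraint curve is well defined and $\nabla g \neq 0$; this is immediate from the implicit function theorem together with the strict convexity hypothesis. After that, the calculation is a two-line consequence of Lagrange multipliers and the defining property of the valley.
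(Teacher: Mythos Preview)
Your proposal is correct and follows essentially the same Lagrange-multiplier argument as the paper. The only cosmetic difference is that the paper takes the constraint to be $\frac{\partial D}{\partial a}(a,b)=0$ directly (so the nonvanishing first component of the constraint gradient is $\frac{\partial^2 D}{\partial a^2}>0$ by strict convexity), whereas you first pass through the implicit function theorem to write the constraint as $a-m(b)=0$; both routes yield $\lambda=0$ from the first coordinate and hence $\nabla D(Q)=0$.
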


\begin{proof}
The valley is defined by the equation $m(a,b)=\frac{\partial
D}{\partial a}(a,b)$. Using Lagrange multipliers we find that at any
extremum point $Q$ of $D$ over the valley,
$$\nabla(D)(Q)=\lambda \nabla m(Q).$$
For some real value $\lambda$. However,
\begin{eqnarray*}
\nabla(D)(Q) &=& (\frac{\partial D}{\partial a}(Q),\frac{\partial
D}{\partial b}(Q)) ~=~ (0,\frac{\partial D}{\partial b}(Q)),
\end{eqnarray*}
because $Q$ is on the valley. H

Also, as for every fixed $b$, $D(a,b)$ is strictly convex, we have
that \begin{eqnarray*} \frac{\partial m}{\partial a}(Q) &=&
\frac{\partial^2 D}{\partial^2 a}(Q) ~>~ 0. \end{eqnarray*}
Therefore, we conclude that $\lambda=0$. This implies that
$\frac{\partial D}{\partial b}(Q)=0$. Hence, $\nabla(D)(Q)=0$.
\end{proof}

With that we prove:

\begin{lemma}
$OFP$ respects condition (\ref{it:saddle}) of Section
\ref{sec:cost}.
\end{lemma}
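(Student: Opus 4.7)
The plan is to apply Lemma \ref{lem:lagrange} to reduce the problem to finding and classifying the critical points of $OFP$ on the valley. All hypotheses of that lemma hold: conditions (\ref{it:kto2})–(\ref{it:valley}) for $OFP$ were verified in the previous lemma, and the strict convexity of $OFP(\cdot,b)$ in $a$ was recorded there as $\partial^2 OFP/\partial a^2 > 0$. Consequently any extremum of $OFP$ on the valley satisfies $\nabla OFP = 0$, so it suffices to (i) locate all such critical points and (ii) verify that the unique one is a maximum of the restriction $D(b) = OFP(m(b),b)$.

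Existence of at least one critical point is immediate from the symmetry. The map in condition (\ref{it:symmetry}) is a $180^\circ$ rotation about $Q^* := (\tfrac{1}{2} - \tfrac{c_2-c_1}{2\mu}, \tfrac{1}{2})$, and the same symmetry forces $Q^*$ to lie on the valley (it is a fixed point of the induced symmetry of the curve $\{(m(b),b)\}$). Any $C^1$ function invariant under a nontrivial rotation has vanishing gradient at the rotation centre, so $\nabla OFP(Q^*)=0$.

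The hard step is \emph{uniqueness}, and this is where the log-concavity of $\Phi$ enters. Writing $r(x) := \phi(x)/\Phi(x)$ for the reverse hazard rate of the standard normal, I would compute both partial derivatives of $OFP$ directly and show, using $\sigma_j = \sqrt{b_j V}$, that the system $\nabla OFP = 0$ collapses to the pair
\begin{equation*}
\frac{\Delta_1}{\Delta_2} \;=\; \sqrt{\tfrac{b}{1-b}} \quad\text{and}\quad \frac{r(\Delta_1)}{\Delta_1} \;=\; \frac{r(\Delta_2)}{\Delta_2}.
\end{equation*}
Log-concavity of $\Phi$ says $r = (\log \Phi)'$ is strictly decreasing, and a short calculation using $\phi'(x) = -x\phi(x)$ then shows that $x \mapsto r(x)/x$ is strictly monotone on the relevant range where both bins have nonnegative spare capacity. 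Injectivity of $r(x)/x$ forces $\Delta_1 = \Delta_2$; substituting back into the first identity gives $b = \tfrac{1}{2}$, and then $\Delta_1 = \Delta_2$ together with $b = \tfrac{1}{2}$ pins $a$ down to $\tfrac{1}{2} - \tfrac{c_2-c_1}{2\mu}$. Hence $Q^*$ is the \emph{only} critical point of $OFP$ on the valley.

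Finally, to classify $Q^*$ as a maximum and obtain the monotonicity on each side of $b = \tfrac{1}{2}$, I would inspect the boundary limits $b \to 0^+$ and $b \to 1^-$. At each limit one bin's variance vanishes so its load becomes deterministic and, on the valley, kept below capacity; its overflow probability tends to $0$, while the other bin inherits the full variance. Thus $D$ at the boundary degenerates to a single-bin overflow probability, which is strictly smaller than $D(\tfrac{1}{2})$. Combined with the fact that $Q^*$ is the unique interior critical point of $D(b)$, this strictly unimodal behaviour forces $D$ to be strictly increasing on $[0,\tfrac{1}{2}]$ and strictly decreasing on $[\tfrac{1}{2},1]$, which is exactly condition (\ref{it:saddle}).
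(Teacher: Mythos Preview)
Your approach is essentially the paper's: invoke Lemma~\ref{lem:lagrange}, extract $\Delta_1/\Delta_2=\sqrt{b/(1-b)}$ from the gradient system, use log-concavity of $\Phi$ to force uniqueness, and finish with a boundary comparison---the only twist is that you conclude $\Delta_1=\Delta_2$ directly from injectivity of $x\mapsto r(x)/x$, whereas the paper reaches a contradiction by pairing $\Delta_1/\Delta_2<1$ (for $b<\tfrac12$) against Lemma~\ref{lem:OFP}, which uses only the valley equation $r(\Delta_1)/\sigma_1=r(\Delta_2)/\sigma_2$ together with $\sigma_1<\sigma_2$ and monotonicity of $r$ to get $\Delta_1\ge\Delta_2$. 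One loose end: your boundary step asserts but does not verify $D(0^+)<D(\tfrac12)$; the paper supplies this by computing $OFP(m(0),0)=1-\Phi\bigl((c-\mu)/\sigma\bigr)$ and bounding $OFP(m(\tfrac12),\tfrac12)\ge 1-\Phi\bigl((c-\mu)/(\sqrt{2}\,\sigma)\bigr)$.
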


\begin{proof}
Let $Q=(a,b)$ be an extremum point of $OFP$ over the valley. We look
at the range $b \in [0..\half)$, $b \ge \half$ is obtained by the
symmetry. Then, by Lemma \ref{lem:lagrange}:

\begin{eqnarray*}
\phi(\Delta_1)\Phi(\Delta_2) \frac{\partial \Delta_1}{\partial a}
&=& -\phi(\Delta_2)\Phi(\Delta_1) \frac{\partial \Delta_2}{\partial
a} \mbox{               , and}\\
\phi(\Delta_1)\Phi(\Delta_2) \frac{\partial \Delta_1}{\partial b}
&=& -\phi(\Delta_2)\Phi(\Delta_1) \frac{\partial \Delta_2}{\partial
b}.
\end{eqnarray*}

Dividing the two equations we get

\begin{eqnarray*}
\frac{\partial \Delta_1}{\partial a} \frac{\partial
\Delta_2}{\partial b} &=& \frac{\partial \Delta_2}{\partial a}
\frac{\partial \Delta_1}{\partial b}.
\end{eqnarray*}

Plugging the partial derivatives of $\Delta_i$ by $a$ and $b$, we
get the equation

\begin{eqnarray*}
\frac{\Delta_1}{\Delta_2} & = & \sqrt{\frac{b}{1-b}}.
\end{eqnarray*}

As $b \le \half$, $b < 1-b$ and we conclude that at $Q$ $\Delta_1 <
\Delta_2$. However, using the log-concavity of the normal c.d.f
function $\Phi$ we prove in Appendix \ref{sec:spmop} that:

\begin{lemma}
\label{lem:OFP}
$\frac{\partial OFP}{\partial a}=0$ at a point
$Q=(a,b)$ with $b \le \half$ implies $\Delta_1 \ge \Delta_2$.
\end{lemma}

Together, this implies that the only extremum point of $OFP$ over
the valley is at $b=\half$. However, at $b=0$, the best is to fill
the largest bin to full capacity with variance $0$, and thus,
$OFP(m(0),0)=1-\Phi(\Delta)$ where $\Delta=\frac{c-\mu}{\sigma}$. On
the other hand, at $b=\half$, $OFP(a=m(\half),\half)=
1-\Phi(\frac{c_1-a
\mu}{\sqrt{\half}\sigma})\Phi(\frac{c_2-(1-a)\mu}{\sqrt{\half}\sigma})$.
As $(c_1-a \mu)+(c_2-(1-a) \mu)=c-\mu$, either $c_1-a \mu$ or
$c_2-(1-a) \mu$ is at most $\frac{c-\mu}{2}$ and therefore
$\Phi(\sqrt{2} \frac{c_1-a
\mu}{\sigma})\Phi(\sqrt{2}\frac{c_2-(1-a)\mu}{\sigma}) \le
\Phi(\sqrt{2} \frac{c-\mu}{2\sigma})=\Phi(\frac{c-\mu}{\sqrt{2}
\sigma}) \le \Phi(\frac{c-\mu}{\sigma})$. We conclude that
$OFP(a,\half) \ge OFP(m(0),0)$ and there is a unique maximum point
on the valley and it is obtained at $b=\half$.
\end{proof}

\section{Bounding the approximation error of the sorting algorithm}
\label{sec:error}

We assume that no input service is too \emph{dominant}. Recall that
we represent service $i$ with the point $P^\ui=(a^\ui,b^\ui)$ and
$P^\uone+P^\utwo+\ldots+P^\un=(1,1)$. Thus, $\sum_i |P^\ui| \ge
|(1,1)| = \sqrt{2}$ (by the triangle inequality) and  $\sum_i
|P^\ui| \le 2$ (because the length of the longest increasing path
from $(0,0)$ to $(1,1)$, is obtained by the path going from $(0,0)$
to $(1,0)$ and then to $(1,1)$). Hence, the average length of an
input point $P^\ui$ is somewhere between $\frac{\sqrt{2}}{n}$ and
$\frac{2}{n}$. Our assumption states that no element takes more than
$L$ times its "fair" share, i.e., that for every $i$, $|P^\ui| \le
\frac{L}{n}$. With that we prove:

\begin{theorem}
\label{thm:error} Let $OPT_f$ be the fractional optimal solution. If $D$ is differentiable, the difference between
the cost on the integral point found by the sorting
algorithm and the cost on the optimal integral (or fractional) point is at most
$\min \set{|\nabla D(\xi_1)| , |\nabla D(\xi_2)|} \frac{L}{n}$,
where $\xi_1 \in [O_1,OPT_f]$, $\xi_2 \in [OPT_f,O_2]$ and $O_1$ and
$O_2$ are the two points on the bottom sorted path between which
$OPT_f$ lies.
\end{theorem}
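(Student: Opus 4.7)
The plan is to combine three ingredients: (i) the fact, already established in Theorem \ref{thm:fractional}, that $OPT_f$ lies on the bottom sorted path; (ii) the observation that the length of the segment of the sorted path on which $OPT_f$ lies is exactly $|P^\ui|$ for some $i$, and hence at most $L/n$ by the non-dominance assumption; and (iii) a first-order Taylor / mean value estimate for the differentiable cost function $D$ along that segment.

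First I would set the stage. Let $[O_1,O_2]$ be the segment of the bottom sorted path containing $OPT_f$; by construction $O_1$ and $O_2$ are consecutive integral points of the form $P_{bottom}^{[i-1]}$ and $P_{bottom}^{[i]}$, so $O_2-O_1 = P^\ui$ for the appropriate $i$ in the VMR-sorted order. Since $OPT_f$ lies on this segment, both $|OPT_f - O_1|$ and $|OPT_f - O_2|$ are bounded by $|O_2 - O_1| = |P^\ui| \leq L/n$. Note also that $O_1$ and $O_2$ are candidate outputs of the sorting algorithm, which picks the best $P^{[i]}$, so the algorithm's cost is at most $\min\{D(O_1),D(O_2)\}$.

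Next I would bound $D(O_j)-D(OPT_f)$ for $j=1,2$. Because $D$ is differentiable, the mean value theorem applied to the one-variable function $t \mapsto D(OPT_f + t(O_j - OPT_f))$ on $[0,1]$ gives a point $\xi_j$ on the segment $[OPT_f,O_j]$ with
\[
D(O_j)-D(OPT_f) \;=\; \nabla D(\xi_j)\cdot (O_j - OPT_f).
\]
Cauchy--Schwarz then yields $|D(O_j)-D(OPT_f)| \leq |\nabla D(\xi_j)|\cdot |O_j - OPT_f| \leq |\nabla D(\xi_j)|\cdot L/n$. Since $OPT_f$ is the fractional optimum, the absolute values can be dropped: $D(O_j) - D(OPT_f) \geq 0$.

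Finally, combining these, the error of the sorting algorithm relative to $OPT_f$ is
\[
\min\{D(O_1),D(O_2)\} - D(OPT_f) \;\leq\; \min\{|\nabla D(\xi_1)|,|\nabla D(\xi_2)|\}\cdot \frac{L}{n},
\]
which is exactly the desired bound (and it also bounds the gap to the optimal integral solution, since $OPT_f$ is no worse than that solution). There is no real obstacle here; the only thing to be careful about is identifying $|O_2 - O_1|$ with a single input vector $|P^\ui|$ so that the non-dominance hypothesis $|P^\ui|\leq L/n$ can be invoked. The argument is otherwise a clean mean-value-theorem estimate that is deliberately agnostic to the specific cost function, which is why the per-problem bounds in Lemmas \ref{lem:spmed:error} and \ref{lem:spmwop:error} are obtained by simply plugging in the relevant partial derivatives.
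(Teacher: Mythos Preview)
Your proof is correct and follows essentially the same approach as the paper: bound $D(OPT_{sort})-D(OPT_f)$ by $\min\{D(O_1)-D(OPT_f),D(O_2)-D(OPT_f)\}$ since the sorting algorithm tries all integral points on the sorted path, then apply the mean value theorem along each of the two segments $[O_1,OPT_f]$ and $[OPT_f,O_2]$ together with the bound $|P^{(i)}|\le L/n$. Your write-up is in fact more explicit than the paper's (you spell out the one-variable parametrization and the Cauchy--Schwarz step), but the argument is identical.
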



\begin{proof}
Suppose we run the sorting algorithm on some input. Let $OPT_{int}$ be the integral optimal solution, $OPT_f$ the fractional optimal solution and $OPT_{sort}$ the integral point the sorting algorithm finds on the bottom sorted path. We wish to bound $D(OPT_{sort})-D(OPT_{int})$ and clearly it is at most $D(OPT_{sort})-D(OPT_f)$. We now look at the two points $O_1$ and $O_2$ on the bottom sorted path between which $OPT_f$ lies (and notice that as far as we know it is possible that $OPT_{sort}$ is none of these points). Since $D(OPT_f) \le D(OPT_{sort}) \le D(O_1)$ and $D(OPT_f) \le D(OPT_{sort}) \le D(O_2)$ the error the sorting algorithm makes is at most $$\min \set{D(O_1)-D(OPT_f),D(O_2)-D(OPT_f)}.$$ We now apply the mean value theorem and use our assumption that for every $i$, $|P^\ui| \le \frac{L}{n}$.
\end{proof}

We define a new system constant, \emph{relative spare capacity},
denoted by $\alpha$ which equals $\frac{c -\mu }{\mu}$, i.e., it
expresses the spare capacity as a fraction of the total mean. We
assume that the system has some constant (possibly small) relative
spare capacity. Also, we only consider solutions where each bin is
allocated services with total mean not exceeding its capacity.
Equivalently, we only consider solutions where $\Delta_j \ge 0$ for
every $1 \le j \le k$. We will later see that under these
conditions the sorting algorithm solves all three cost functions
with a small error going fast to zero with $n$.

We remark that in fact the proof shows something stronger: the
deviation of any (not necessarily optimal) fractional solution on
the bottom sorted path, is close to the deviation of the integral
solution to the left or to the right of it on the bottom sorted
path.

\section{Proof of Lemma \ref{lem:epigraph_integral}}
\label{app:epigraph_integral_proof}
\begin{proof}
We introduce some notation. Let $\tau=\tau_1,\ldots,\tau_n$ be a sequence of $n$ elements that is a reordering of $\set{1,\ldots,n}$. We associate with $\tau$ the $n$ partial sums
$P_\tau^{[1]},\ldots,P_\tau^{[n]}$ where $P_\tau^{[i]}$ is $\sum_{j=1}^i P^{(\tau_j)}$, i.e., $P_\tau^{[i]}$ is the integral point that is the sum of the first $i$ points according to the sequence $\tau$. We also define $P_\tau^{[0]}=(0,0)$ and $P_\tau^{[n]}=(1,1)$. The \emph{curve connecting $\tau$} is the curve that is formed by connecting $P_\tau^{[i]}$ and $P_\tau^{[i+1]}$ with a line, for $i=0,\ldots,n-1$.

Assume that in the sequence $\tau=\tau_1,\ldots,\tau_n$ there is some index $i$ such that the VMR of $P^{(\tau_i)}$ is larger than the VMR of $P^{(\tau_{i+1})}$. Consider the sequence $\tau'$ that is the same as $\tau$ except for switching the order of $\tau_i$ and $\tau_{i+1}$. I.e., $\tau'=\tau_1,\ldots,\tau_{i-1},\tau_{i+1},\tau_i,\tau_{i+2},\ldots,\tau_n$. We claim that the curve connecting $\tau'$ lies beneath the curve connecting $\tau$. To see that notice that both curves are the same up to the point $P_\tau^{[i-1]}$. There, the two paths split. $\tau$ adds $P^{(\tau_{i})}$ and then $P^{({\tau_{i+1}})}$ while $\tau'$ first adds $P^{(\tau_{i+1})}$ and then $P^{({\tau_{i}})}$. Then the two curves coincide and overlap all the way to $(1,1)$. In the section where the two paths differ, the two different paths form a parallelogram with $P^{(\tau_i)}$ and $P^{(\tau_{i+1})}$ as two neighboring edges of the parallelogram. As the angle $P^{(\tau_{i+1})}$ has with the $a$ axis is smaller than the angle $P^{(\tau_{i})}$ has with the $a$ axis, the curve connecting $\tau'$ goes beneath that of $\tau$.

To finish the argument, let $P_I$ be an arbitrary integral point for some $I \subseteq [n]$. Look at the sequence $\tau$ that starts with the elements of $I$ followed by the elements of $[n] \setminus I$ in an arbitrary order. Notice that $P_I$ lies on the curve connecting $\tau$. Now run a bubble sort on $\tau$, each time ordering a pair of elements by their VMR. Notice that the process terminates with the sequence that sorts the elements by their VMR and the curve connecting the final sequence is the bottom sorted path. Thus, we see that the bottom sorted path lies beneath the curve connecting $\tau$, and in particular $P_I$ lies above the bottom sorted path. A similar argument shows $P_I$ lies underneath the upper sorted path.
\end{proof}

\section{Proof of Theorem \ref{thm:fractional}: The optimal solution is on the bottom sorted path} 
\label{app:fractional_proof}

\begin{proof}
Consider an arbitrary fractional point $(a_0,b_0)$ lying strictly inside the polygon confined by the upper and bottom sorted paths. If $b_0 \le \half$, then by keeping $b=b_0$ constant and changing $a$ till it reaches the valley we strictly decrease cost (because $D$ is strictly monotone in this range). Now, when changing $a$ we either hit the bottom sorted path or the valley. If we hit the bottom sorted path, we found a point on the bottom sorted path with less cost and we are done. If we hit the valley, we can go down the valley until we hit the bottom sorted path and again we are done (as $D$ is strictly monotone on the valley).

If $b_0 \ge \half$ we recall two facts that we already know:
\begin{itemize}
\item The point $(1-a_0,1-b_0)=\varphi(a_0,b_0)$ is fractional (since $(a_0,b_0)$ is fractional and $\varphi$ maps fractional points to fractional points), and,
\item By the reflection symmetry we know that $D(a_0,b_0)=D(1-a_0-\zeta,1-b_0)$ where $\zeta=\frac{c_2-c_1}{\mu} \ge 0$.
\end{itemize}
Now, $(1-a_0-\zeta,1-b_0)$ has $b$ coordinate that is at most $\half$. Also  $(1-a_0-\zeta,1-b_0)$ lies to the \emph{left} of the fractional point $(1-a_0,1-b_0)$ (since $\zeta>0$) and therefore it lies above the bottom sorted path. We therefore see that the point $(a_0,b_0)$ has a corresponding fractional point with the same cost and with $b$ coordinate at most $\half$. Applying the argument that appears in the first paragraph of the proof we conclude that there exists some point on the bottom sorted path with less cost, and conclude the proof.
\end{proof}

\section{Standard Normal Distribution}
\label{app:standard_normal_distribution}

The probability density function of the standard normal distribution (that has mean $0$ and variance $1$) is $\phi(x)= \frac{1}{\sqrt{2\pi}} e^{-\frac{x^2}{2}}$ and the cumulative distribution function is $\Phi(x)= \frac{1}{\sqrt{2\pi}} \int_{-\infty}^x e^{-\frac{t^2}{2}}dt$.
Clearly, $\Phi'(x)= \phi(x)$. Also, $\phi'(x)=  -\frac{x}{\sqrt{2\pi}} e^{-\frac{x^2}{2}} =
-x \phi(x)$. The second derivative is
$\phi''(x)=
- \phi(x) + x^2 \phi(x) =
(x^2 - 1) \phi(x)$.

\section{SP-MED}
\subsection{Expected Deviation of a single bin}
\label{app:spmed:dev_single_bin}

By definition the expected deviation of a single bin is $Dev_{S_j}=
\frac{1}{\sigma_j \sqrt{2 \pi}} \int_{c_j}^{\infty} (x - c_j)
e^{-\frac{(x-\mu_j)^2}{2\sigma_j^2}} dx$. Doing the variable change
$t = \frac{x - \mu_j}{\sigma_j}$
%
%
%
and then the variable change  $y = \frac{-t^2}{2}$ we get:

\begin{eqnarray*}
Dev_{S_j} & = & (\mu_j - c_j) [1 - \Phi(\frac{c_j -
\mu_j}{\sigma_j})] -
 \frac{\sigma_j}{\sqrt{2 \pi}} \int_{- \frac{1}{2} (\frac{c_j - \mu_j}{\sigma_j})^2}^{- \infty} e^{y} dy \\
& = & -\sigma_j \Delta_j [1 - \Phi(\Delta_j)] +
\frac{\sigma_j}{\sqrt{2 \pi}} e^{- \frac{1}{2} \Delta_j^2}  =
\sigma_j [ \phi(\Delta_j)-\Delta_j (1-\Phi(\Delta_j))].
\end{eqnarray*}

Denoting  $g(\Delta)= \phi(\Delta) - \Delta [1-\Phi(\Delta)]$ we see that $Dev_{S_j}=\sigma_j g(\Delta_j)$.

\subsection{Symmetry} \label{app:spmed:symmetry}

\begin{claim}
\label{clm:symmetry} $Dev(a,b)=Dev(1-a-\frac{c_2-c_1}{\mu}, 1-b)$.
\end{claim}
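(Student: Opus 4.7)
The plan is to show that the substitution $(a,b)\mapsto (a',b')=(1-a-\tfrac{c_2-c_1}{\mu},\,1-b)$ is exactly the map that interchanges the roles of the two bins, so the sum $Dev(a,b)=\sigma_1 g(\Delta_1)+\sigma_2 g(\Delta_2)$ is preserved term-by-term after relabeling. This is purely a direct computation; no analytic fact beyond the formula for $Dev_{S_j}$ derived in Appendix \ref{app:spmed:dev_single_bin} is needed.

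First I would unpack the notation. Under $(a,b)$ we have $\sigma_1=\sqrt{b}\,\sigma$, $\sigma_2=\sqrt{1-b}\,\sigma$, $\Delta_1=(c_1-a\mu)/\sigma_1$, and $\Delta_2=(c_2-(1-a)\mu)/\sigma_2$. Writing $\sigma_j',\Delta_j'$ for the analogous quantities evaluated at $(a',b')$, the variance substitution $b'=1-b$ immediately gives $\sigma_1'=\sqrt{1-b}\,\sigma=\sigma_2$ and $\sigma_2'=\sqrt{b}\,\sigma=\sigma_1$.

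Next I would verify that the shifted mean substitution swaps the $\Delta$'s. Since $a'\mu=(1-a)\mu-(c_2-c_1)$, we get
\[
\Delta_1'=\frac{c_1-a'\mu}{\sigma_1'}=\frac{c_1-(1-a)\mu+(c_2-c_1)}{\sigma_2}=\frac{c_2-(1-a)\mu}{\sigma_2}=\Delta_2,
\]
and similarly $(1-a')\mu=a\mu+(c_2-c_1)$ gives
\[
\Delta_2'=\frac{c_2-(1-a')\mu}{\sigma_2'}=\frac{c_2-a\mu-(c_2-c_1)}{\sigma_1}=\frac{c_1-a\mu}{\sigma_1}=\Delta_1.
\]
The key takeaway is that the additive correction $-(c_2-c_1)/\mu$ in the $a$-coordinate is precisely what compensates for the differing capacities once the variances are swapped.

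Finally, plugging into the formula $Dev=\sigma_1 g(\Delta_1)+\sigma_2 g(\Delta_2)$,
\[
Dev(a',b')=\sigma_1' g(\Delta_1')+\sigma_2' g(\Delta_2')=\sigma_2 g(\Delta_2)+\sigma_1 g(\Delta_1)=Dev(a,b),
\]
which is the claim. There is no real obstacle; the only thing to be careful about is bookkeeping the shift by $(c_2-c_1)/\mu$ and checking that it cancels in both $\Delta_1'$ and $\Delta_2'$. When $c_1=c_2$ the shift vanishes and the statement reduces to the obvious bin-swap symmetry $Dev(a,b)=Dev(1-a,1-b)$.
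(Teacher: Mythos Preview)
Your proof is correct and follows essentially the same approach as the paper: you verify that under the substitution $(a,b)\mapsto(1-a-\tfrac{c_2-c_1}{\mu},1-b)$ the pairs $(\sigma_1,\Delta_1)$ and $(\sigma_2,\Delta_2)$ are interchanged, and then conclude directly from the formula $Dev=\sigma_1 g(\Delta_1)+\sigma_2 g(\Delta_2)$. The paper states exactly these four identities ($\sigma_1\leftrightarrow\sigma_2$, $\Delta_1\leftrightarrow\Delta_2$) as what needs to be checked and verifies them in the same way.
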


\begin{proof}
Let us define $\sigma_1(b)=\sqrt{b}~\sigma$,
$\sigma_2(b)=\sqrt{1-b}~\sigma$,
$\Delta_1(a,b)=\frac{c_1-a\mu}{\sigma_1(b)}$ and
$\Delta_2(a,b)=\frac{c_2-(1-a)\mu}{\sigma_2(b)}$. We know that
$Dev(a,b)=\sigma_1(b) g(\Delta_1(a,b))+\sigma_2(b)
g(\Delta_2(a,b))$. To prove the claim it is enough to show that the
following four equations hold: $\sigma_1(b)=\sigma_2(1-b)$,
$\sigma_2(b)=\sigma_1(1-b)$,
$\Delta_1(a,b)=\Delta_2(1-a+\frac{c_1-c_2}{\mu}, 1-b)$ and
$\Delta_2(a,b)=\Delta_1(1-a+\frac{c_1-c_2}{\mu}, 1-b)$.

Indeed, $\sigma_1(1-b) = \sqrt{1-b} ~\sigma = \sigma_2(b)$ and
similarly $\sigma_2(1-b) = \sigma_1(b)$.
Also,
\begin{eqnarray*}
\Delta_2(1-a-\frac{c_2-c_1}{\mu}, 1-b) & = &
\frac{c_2-(1-(1-a+\frac{c_1-c_2}{\mu}))\mu}{\sigma_2(1-b)} \\
\end{eqnarray*}
A similar check shows that $\Delta_1(1-a-\frac{c_2-c_1}{\mu}, 1-b) =
\Delta_2(a,b)$.
\end{proof}

\subsection{The partial derivatives of $Dev$}
\label{app:spmed:partial_derivatives}

The system has absolute constants $\mu$ and $V$, $\sigma=\sqrt{V}$. We let $\sigma_1=\sqrt{b}\sigma$ and $\sigma_2=\sqrt{1-b}\sigma$.
We let $\Delta_1= \frac{c_1 - \mu_1}{\sigma_1} = \frac{c_1 - a \mu}{\sigma_1}$ and  $\Delta_2= \frac{c_2 - \mu_2}{\sigma_2} = \frac{c_2 - (1-a) \mu}{\sigma_2}$. In this notation, $Dev(a,b)=\sigma_1~ g(\Delta_1) + \sigma_2~ g(\Delta_2)$.

We calculate the $g$ derivatives. Since $g(\Delta)= \phi(\Delta) +
\Delta \Phi(\Delta) - \Delta$ we have that $g'(\Delta)=
\phi'(\Delta)+ \Phi(\Delta) + \Delta \phi(\Delta) - 1= -\Delta
\phi(\Delta) + \Phi(\Delta) + \Delta \phi(\Delta) - 1 =
-(1-\Phi(\Delta))$ and $g''(\Delta)= \phi(\Delta)$. As $\Phi(\Delta)
\le 1$, $g'(\Delta) \le 0$ and $g$ is monotonically decreasing. As
$g''(\Delta)>0$, $g$ is convex.

\subsubsection{Deriving by $a$} \label{app:spmed:partial_a_derivative}

We have:
$\frac{\partial \Delta_1}{\partial a}
=
-\frac{\mu}{\sigma_1}$ and
$\frac{\partial \Delta_2}{\partial a}
=
\frac{\mu}{\sigma_2}$. Also,

\begin{eqnarray*}
\frac{\partial Dev}{\partial a}(a,b)
& = &
\sigma_1 \frac{\partial g(\Delta_1)}{\partial \Delta_1}~ \frac{\partial \Delta_1}{\partial a} + \sigma_2 \frac{\partial g(\Delta_2)}{\partial \Delta_2}~ \frac{\partial \Delta_2}{\partial a} \\
& = & - \sigma_1 \frac{\mu}{\sigma_1}~ [\Phi(\Delta_1) - 1] +
\sigma_2 \frac{\mu}{\sigma_2}~ [\Phi(\Delta_2) - 1]   =  \mu~
[\Phi(\Delta_2) - \Phi(\Delta_1)]
\end{eqnarray*}

Since $\Phi$ is monotonic, a zero value is achieved when $\Delta_1 = \Delta_2$. Deriving again by $a$:

\begin{eqnarray*}
\frac{\partial^2 Dev}{\partial a^2} & = & \mu~ [~\frac{\partial
\Phi(\Delta_2)}{\partial \Delta_2}~ \frac{\partial
\Delta_2}{\partial a} - \frac{\partial \Phi(\Delta_1)}{\partial
\Delta_1}~ \frac{\partial \Delta_1}{\partial a}~]   =  \mu^2~
[\frac{\phi(\Delta_2)}{\sigma_2} + \frac{\phi(\Delta_1)}{\sigma_1}]
\end{eqnarray*}

Since $\frac{\partial^2 Dev}{\partial a^2} \geq 0$, it follows that for any $0 < b < 1$, $Dev(a)$ is convex and hence $\Delta_1 = \Delta_2$ is a minimum point in this range.

\subsubsection{Deriving by $b$} \label{app:spmed:partial_b_derivative}

We have: $\frac{\partial \sigma_1}{\partial b}=
\frac{\sigma}{2\sqrt{b}}$ and
$\frac{\partial \sigma_2}{\partial b}
=
-\frac{\sigma}{2\sqrt{1-b}}$. Also,
$\frac{\partial \Delta_1}{\partial b}
=
-\frac{\Delta_1}{2b}$ and
$\frac{\partial \Delta_2}{\partial b}
=
\frac{\Delta_2}{2(1-b)}$. Now,

\begin{eqnarray*}
\frac{\partial Dev}{\partial b} & = & \frac{\partial
\sigma_1}{\partial b} g(\Delta_1) + \sigma_1 \frac{\partial
g(\Delta_1)}{\partial \Delta_1}~ \frac{\partial \Delta_1}{\partial
b} +
\frac{\partial \sigma_2}{\partial b} g(\Delta_2) + \sigma_2 \frac{\partial g(\Delta_2)}{\partial \Delta_2}~ \frac{\partial \Delta_2}{\partial b} \\
& = &
\frac{\sigma}{2} [\frac{\phi(\Delta_1)}{\sqrt{b}} - \frac{\phi(\Delta_2)}{\sqrt{1-b}}]  ~ = ~
\frac{V}{2} [\frac{\phi(\Delta_1)}{\sigma_1} - \frac{\phi(\Delta_2)}{\sigma_2}]
\end{eqnarray*}

Deriving a second time we get:

\begin{eqnarray*}
\frac{\partial^2 Dev}{\partial b^2} & = & \frac{\sigma}{2}
[\frac{1}{\sqrt{b}}~ \frac{\partial \phi(\Delta_1)}{\partial
\Delta_1}~ \frac{\partial \Delta_1}{\partial b} -
\frac{1}{2b\sqrt{b}}~ \phi(\Delta_1) -
\frac{1}{\sqrt{1-b}}~ \frac{\partial \phi(\Delta_2)}{\partial \Delta_2}~ \frac{\partial \Delta_2}{\partial b} - \frac{1}{2(1-b)\sqrt{1-b}}~ \phi(\Delta_2)~] \\
& = &
\frac{\sigma}{4} [\frac{\phi(\Delta_1)}{b\sqrt{b}}~ (\Delta_1^2 - 1) + \frac{\phi(\Delta_2)}{(1-b)\sqrt{1-b}}~ (\Delta_2^2 - 1) ] \\
\end{eqnarray*}

\subsubsection{The Mixed Derivative}

\begin{eqnarray*}
\frac{\partial^2 Dev}{\partial b \partial a} & = &
\frac{\partial}{\partial b} \frac{\partial Dev}{\partial a}   =
\frac{\partial}{\partial b} [ ~\mu [\Phi(\Delta_2) -
\Phi(\Delta_1)]~ ]   =
\mu~ [~\frac{\partial \Phi(\Delta_2)}{\partial \Delta_2}~ \frac{\partial \Delta_2}{\partial b} - \frac{\partial \Phi(\Delta_1)}{\partial \Delta_1}~ \frac{\partial \Delta_1}{\partial b}~] \\
& = & \mu~ [\phi(\Delta_2) \frac{\Delta_2}{2 (1-b)} + \phi(\Delta_1)
\frac{\Delta_1}{2b}]   =  \mu~ [\frac{1}{2b} \Delta_1 \phi(\Delta_1)
+ \frac{1}{2 (1-b)} \Delta_2 \phi(\Delta_2)]
\end{eqnarray*}

\subsubsection{The saddle point} \label{app:spmed:saddle}

The Hessian of the function $Dev$ contains the second order partial derivatives of $Dev$, i.e., it is a $2 \times 2$ matrix $H$,

\begin{eqnarray*}
H(a,b) & = & \left( \begin{array}{cc}
                 \frac{\partial^2 Dev}{\partial a^2}(a,b) & \frac{\partial^2 Dev}{\partial b \partial a}(a,b) \\
                \frac{\partial^2 Dev}{\partial a \partial b}(a,b) & \frac{\partial^2 Dev}{\partial b^2}(a,b)
               \end{array}
\right)
\end{eqnarray*}
$H$ is symmetric and therefore at any point $(a,b)$ the matrix $H(a,b)$ has two real eigenvalues. Now, assume the first order derivatives of $Dev$ vanish at some point $(a_0,b_0)$. If the Hessian at that point $(a_0,b_0)$ has negative determinant, then the product of these two eigenvalues is negative, and hence one is positive and the other negative. It then follows that the point $(a_0,b_0)$ is a \emph{saddle point} of $Dev$.

We now compute the Hessian at the point $(a_0, b_0) = (\half-\frac{c_2-c_1}{2D}, \frac{1}{2})$ which is the center of the reflection symmetry. We know that at this point $\Delta_1 = \Delta_2$ and $\sigma_1 = \sigma_2$. If then follows that:

\begin{eqnarray*}
H(a_0,b_0) & = & \left( \begin{array}{cc}
                 2 \sqrt{2}~ \mu^2~ \frac{\phi(\Delta_1)}{\sigma} & 2~ \mu \Delta_1 \phi(\Delta_1) \\
                2~ \mu \Delta_1 \phi(\Delta_1) & \sqrt{2}~ \sigma~ \phi(\Delta_1)~ [\Delta_1^2~ - 1]
               \end{array}
\right)
\end{eqnarray*}




We can now compute the determinant:

\begin{eqnarray*}
\det[H(a_0, b_0)] & = & \frac{\partial^2 Dev}{\partial a^2}(a_0,
b_0)~ \frac{\partial^2 Dev}{\partial b^2}(a_0, b_0) -
  [\frac{\partial^2 Dev}{\partial a \partial b}(a_0, b_0)]^2 \\
& = &
2 \sqrt{2}~ \mu^2~ \frac{\phi(\Delta_1)}{\sigma}~ \sqrt{2}~ \sigma~ \phi(\Delta_1)~ [\Delta_1^2~ - 1] -  4 \mu^2 \Delta_1^2~ [\phi(\Delta_1)]^2 \\
& = &
- 4 \mu^2 [\phi(\Delta_1)]^2 ~<~ 0.
\end{eqnarray*}

As $\det[H(a_0, b_0)] < 0$ and the partial derivatives of the first order vanish at $(a_0,b_0)$ we conclude that $(a_0, b_0)$ is a saddle point.

%
%
%

\section{SP-MWOP}
\subsection{Expected overflow probability of a single bin}

The overflow probability of bin $j$, denoted by $OFP_{S_j}$ is
$OFP_{S_j}(\mu_j,V_j)
 =
\frac{1}{\sigma_j \sqrt{2 \pi}} \int_{c_j}^{\infty}
e^{-\frac{(x-\mu_j)^2}{2\sigma_j^2}} dx$. Substituting $t = \frac{x
- \mu_j}{\sigma_j}$ we get

\begin{eqnarray*}
OFP_{S_j}(\mu_j,V_j) & = & \frac{1}{\sqrt{2 \pi}} \int_{\frac{c_j -
\mu_j}{\sigma_j}}^{\infty} e^{-\frac{t^2}{2}} dt   =  1 -
\Phi(\frac{c_j - \mu_j}{\sigma_j})  ~=~ 1 - \Phi(\Delta_j).
\end{eqnarray*}
\label{app:spmwop:single_bin}

\section{SP-MOP}
\label{sec:spmop}
\subsection{Proof of Lemma \ref{lem:OFP}}

\begin{proof}
The condition $\frac{\partial OFP}{\partial a}=0$ is equivalent to
\begin{eqnarray*}
\frac{\phi(\Delta_1)}{\Phi(\Delta_1)} & = &
\frac{\sigma_1}{\sigma_2} \cdot
\frac{\phi(\Delta_2)}{\Phi(\Delta_2)}
\end{eqnarray*}
As $b < \half$, $b < 1-b$ and $\sigma_1 < \sigma_2$. Hence,
\begin{eqnarray*}
\frac{\phi(\Delta_1)}{\Phi(\Delta_1)} & < &
\frac{\phi(\Delta_2)}{\Phi(\Delta_2)}.
\end{eqnarray*}
Denote $h(\Delta)=\frac{\phi(\Delta)}{\Phi(\Delta)}$. We will prove
that $h$ is monotone decreasing, and this implies that $\Delta_1 >
\Delta_2$.

To see that $h$ is monotone decreasing define $H(\Delta)=\ln
(\Phi(\Delta))$. Then $h=H'$. Therefore, $h'=H''$. However, $\Phi$
is log-concave, hence $H'' <0$. We conclude that $h' < 0$ and $h$ is
monotone decreasing.
\end{proof}

\section{Unbalancing bin capacities is always better}
\label{app:unbalance_bin_capacities}

Suppose we are given a capacity budget $c$ and we have the freedom to choose capacities $c_1,c_2$ that sum up to $c$ for two bins. Which choice is the best? Offhand, it is possible that for each input there is a different choice of $c_1$ and $c_2$ that minimizes the expected deviation. In contrast, we show that the minimum expected deviation always decreases as the difference $c_2 - c_1$ increases.

\begin{lemma}
\label{lem:ubc} 
Given a capacity budget $c$, the minimum expected deviation decreases as $c_2 - c_1$ increases. In particular the best choice is having a single bin with capacity $c$ and the worst choice is splitting the capacities evenly between the two bins.
\end{lemma}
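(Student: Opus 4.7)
The plan is to parameterize $(c_1, c_2) = (c/2 - s, c/2 + s)$ for $s \in [0, c/2]$ (WLOG $c_1 \le c_2$) and prove that $F(s) := F(c/2 - s, c/2 + s)$ is non-increasing in $s$, so that the balanced split ($s = 0$) is worst and the single bin ($s = c/2$) is best. The best-case extreme follows from a one-line probabilistic coupling: since $\max(u, 0) + \max(v, 0) \ge \max(u + v, 0)$ pointwise, any two-bin partition $(S_1, S_2)$ satisfies
\[
\sum_{j=1}^{2} \E\!\left[\max\!\left(\sum_{i \in S_j} X^\ui - c_j,\, 0\right)\right]
\;\ge\; \E\!\left[\max\!\left(\sum_{i=1}^{n} X^\ui - c,\, 0\right)\right]
\;=\; \sqrt{V}\, g\!\left(\frac{c-\mu}{\sqrt{V}}\right),
\]
which is precisely the deviation of a single bin of capacity $c$, attained at $(c_1, c_2) = (0, c)$.

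For interior monotonicity I would invoke Theorem \ref{thm:fractional} to place the fractional optimum on the bottom sorted path, a curve fixed independently of $(c_1, c_2)$, and apply the envelope theorem: $\partial F / \partial c_j = g'(\Delta_j^*) = -(1 - \Phi(\Delta_j^*))$, so
\[
\frac{dF}{ds} \;=\; -\frac{\partial F}{\partial c_1} + \frac{\partial F}{\partial c_2} \;=\; \Phi(\Delta_2^*) - \Phi(\Delta_1^*).
\]
It therefore suffices to prove the ``bin-1-is-safer'' inequality $\Delta_1^* \ge \Delta_2^*$ at the optimum whenever $c_1 \le c_2$; this will be the hard part.

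For that inequality I would argue geometrically. Along the bottom sorted path, $\Delta_1 - \Delta_2$ is strictly monotone from $+\infty$ at $(0, 0)$ to $-\infty$ at $(1, 1)$, so it crosses zero at a unique valley intersection $(a_v, b_v)$. Differentiating the valley equation $(c_1 - a\mu)^2(1 - b) = (c_2 - (1 - a)\mu)^2 b$ shows that on the valley $b$ is strictly decreasing in $a$ and equals $\half$ exactly at the saddle $a_{sa} = \half - \frac{c_2 - c_1}{2\mu}$. Because the cumulative VMR of any prefix of services ordered by increasing VMR is at most the overall VMR $V/\mu$, the bottom sorted path lies at or below the diagonal $b = a$; at $a_{sa} \le \half$ this forces the sorted path to sit at $b_{sp}(a_{sa}) \le a_{sa} \le \half$, beneath the valley. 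By the opposite monotonicities, the unique crossing then satisfies $b_v \le \half$.

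To convert $b_v \le \half$ into $\Delta_1^* \ge \Delta_2^*$, I would compute the directional derivative of $Dev$ along the sorted path at $a = a_v$. By the chain rule it equals $\partial Dev/\partial a + (\partial Dev/\partial b)(db/da)$; the first term $\mu(\Phi(\Delta_2) - \Phi(\Delta_1))$ vanishes at $a_v$ by the valley condition, leaving $\frac{V}{2}\phi(\Delta_v)(1/\sigma_1 - 1/\sigma_2)(db/da)$, which is non-negative because $b_v \le \half$ gives $\sigma_1 \le \sigma_2$ and $db/da \ge 0$ holds on any sorted path. Hence $Dev$ is non-decreasing at $a_v$ along the path, forcing the optimum to lie at some $a^* \le a_v$, i.e., on the side where $\Delta_1 \ge \Delta_2$. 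Plugging back into the envelope formula gives $dF/ds \le 0$ for $s > 0$, and the claimed monotonicity (together with both extremes of the lemma) follows by continuity of $F$ on $[0, c/2]$.
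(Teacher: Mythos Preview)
Your route is genuinely different from the paper's, and considerably more involved. The paper does not compute $dF/ds$ at all; it observes the translation identity
\[
Dev_{c_1-\tilde c,\,c_2+\tilde c}(a,b)=Dev_{c_1,c_2}\!\left(a-\tfrac{\tilde c}{\mu},\,b\right),
\]
so enlarging $c_2-c_1$ shifts the whole graph of $Dev$ to the left while the sorted path (which does not depend on the capacities) stays put. The old optimal value is then achieved by the new $Dev$ at the point $(a-\tilde c/\mu,\,b)$, which lies strictly \emph{above} the bottom sorted path, and Theorem~\ref{thm:fractional} immediately gives a strictly smaller new minimum. That is the entire argument.

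Your envelope computation and the coupling bound for the single--bin extreme are fine, but the heart of your argument --- the inequality $\Delta_1^*\ge\Delta_2^*$ --- has a real gap. From ``the directional derivative of $Dev$ along the sorted path is $\ge 0$ at the valley crossing $a_v$'' you cannot conclude ``the optimum lies at some $a^*\le a_v$''. That inference would need $Dev$ restricted to the sorted path to be unimodal, or at least nondecreasing on all of $[a_v,1]$, and you have not shown either. A single sign of the derivative at one point does not exclude a further local minimum to the right of $a_v$; to rule that out you would essentially have to rerun the improvement process from the proof of Theorem~\ref{thm:fractional} (move left to the valley, then down the valley to $(a_v,b_v)$), and you would still need a separate treatment when $b_{sp}(a)>\tfrac12$ via the reflection symmetry. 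A smaller issue: your argument that $b_v\le\tfrac12$ evaluates the sorted path at $a_{sa}=\tfrac12-\tfrac{c_2-c_1}{2\mu}$, but $a_{sa}<0$ whenever $c_2-c_1>\mu$, so that case is not covered as written. All of this is avoidable: the paper's shift argument gets the full monotonicity in one stroke without ever locating the optimum or comparing $\Delta_1^*$ with $\Delta_2^*$.
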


\begin{proof}
Recall that $\Delta_1(a,b) = \frac{c_1 - a\mu}{\sigma \sqrt{b}}$ and $\Delta_2(a,b) = \frac{c_2 - (1-a)\mu}{\sigma \sqrt{1-b}}$. Therefore, if we reduce $c_1$ by $\tilde{c}$ and increase $c_2$ by $\tilde{c}$, we get
$$\tilde{\Delta}_1(a,b) = \frac{c_1 - \tilde{c} - a\mu}{\sigma \sqrt{b}} = \frac{c_1 - (a - \frac{\tilde{c}}{\mu})\mu}{\sigma \sqrt{b}} = \Delta_1(a-\frac{\tilde{c}}{\mu},b).$$

Similarly, $\tilde{\Delta}_2(a,b) = \Delta_2(a-\frac{\tilde{c}}{\mu},b)$. Let $Dev_{c_1,c_2}(a,b)$ denote the expected deviation with bin capacities $c_1,c_2$.
As $Dev(a,b)=\sigma_1(b) g(\Delta_1(a,b))+\sigma_2(b) g(\Delta_2(a,b))$ we see that
$$Dev_{c_1-\tilde{c}, c_2+\tilde{c}}(a,b)=Dev_{c_1,c_2}(a-\frac{\tilde{c}}{\mu},b),$$
i.e., the graph is shifted  left by $\frac{\tilde{c}}{\mu}$.

Notice that the bottom sorted path does not depend on the bin capacities and is the same in both cases. Let $(a,b)$ be the optimal fractional solution for bin capacities $c_1,c_2$. We know that $(a,b)$ is on the bottom sorted path. Let $\tilde{a}=a-\frac{\tilde{c}}{\mu}$. We know that $Dev_{c_1-\tilde{c}, c_2+\tilde{c}}(\tilde{a},b)=Dev_{c_1,c_2}(a,b)$. The point $(\tilde{a},b)$ lies to the left of the bottom sorted path and therefore above it. As the optimal solution for bin capacities $c_1-\tilde{c}, c_2+\tilde{c}$ is also on the bottom sorted path and is strictly better than any internal point, we conclude that the expected deviation for bin capacities $c_1-\tilde{c}, c_2+\tilde{c}$ is strictly smaller than the expected deviation for bin capacities $c_1,c_2$.
\end{proof}

An immediate corollary is the trivial fact that putting all the capacity budget in one bin is best. Obviously, this is not always possible nor desirable, but if there is tolerance in each bin capacity, we recommend minimizing the number of bins.

\end{document}